\documentclass[lettersize,journal]{IEEEtran}
\usepackage{amsthm}
\usepackage{amsmath,amsfonts}
\usepackage{algorithmic}
\usepackage{algorithm}
\usepackage{amssymb}
\usepackage{array}
\usepackage[caption=false,font=normalsize,labelfont=sf,textfont=sf]{subfig}
\usepackage{textcomp}
\usepackage{stfloats}
\usepackage{url}
\usepackage{verbatim}
\usepackage{graphicx}
\usepackage{cite}
\usepackage{cases}
\usepackage{hyperref}
\usepackage{bm}
\usepackage{color}
\usepackage{multirow}
\usepackage{booktabs}
\usepackage{tabularx}

\hypersetup{colorlinks=true,
	linkcolor=blue,
	citecolor=blue,      
	urlcolor=black,
}

\newtheoremstyle{mylemma}
{}{}                 
{\normalfont}        
{}                   
{\bfseries}         
{.}                  
{ }                  
{\thmname{#1}\thmnumber{ #2}\thmnote{ (#3)}} 

\theoremstyle{mylemma}
\newtheorem{lemma}{Lemma}
\newtheorem{remark}{Remark}
\newtheorem{theorem}{Theorem}

\newtheorem{definition}{Definition}

\begin{document}

\title{Fidelity Where it Matters: Site-Specific Nonuniform Refinement for Wireless Digital Twins}

\author{Zihao Zhou,~\IEEEmembership{Graduate Student Member,~IEEE}, Zhaolin Wang,~\IEEEmembership{Member,~IEEE}, \\ and Yuanwei Liu,~\IEEEmembership{Fellow,~IEEE}
\thanks{The authors are with the Department of Electrical and Computer Engineering, The University of Hong Kong, Hong Kong (e-mail: eezihaozhou@connect.hku.hk,zhaolin.wang@hku.hk,yuanwei@hku.hk)}}

\maketitle

\begin{abstract}
Wireless digital twins (WDTs) enable site-specific learning, management, and evaluation in wireless networks. However, constructing and maintaining a high-fidelity WDT over large-scale complex environments can be prohibitively expensive, especially in terms of data acquisition, geometric reconstruction, storage, and ray tracing. To address this issue, a task-oriented nonuniform refinement framework for WDTs is proposed, where limited resources are selectively allocated to the WDT components that matter most to wireless fidelity. Specifically, a unified refinement framework is first developed, which maximizes task-level fidelity under resource constraints through fine-grained component-wise fidelity allocation. This framework is then instantiated for building-level geometry refinement in urban WDTs. It is found that different buildings exhibit highly heterogeneous impacts on wireless fidelity. Motivated by this observation, an ellipsoid-guided selective refinement algorithm (EGSR) is proposed. By jointly considering the relevance of each building to both line-of-sight (LoS) and non-line-of-sight (NLoS) propagation paths, its refinement priority can be estimated using only a low-fidelity WDT. Simulations across multiple urban scenarios show that EGSR can substantially improve radio-map fidelity and preserve beamforming effectiveness by refining only a small subset of buildings. These results demonstrate the potential of task-oriented fidelity allocation as a scalable principle for constructing efficient and performance-aware WDTs, thereby facilitating reliable site-specific learning and optimization.
\end{abstract}

\begin{IEEEkeywords}
Beamforming, digital twin, radio map, site-specific learning, wireless communications.
\end{IEEEkeywords}

\section{Introduction}
\IEEEPARstart{D}{igital} twins (DTs), in a broad sense, refer to synchronized digital replicas of physical entities or systems that accurately reflect the behavior, performance, and operating conditions of their real-world counterparts\cite{Tao2024wireless}. In recent years, digital twins have been widely applied in various domains, such as smart cities\cite{Lv2022deep}, healthcare\cite{Chen2023digital}, and industrial automation\cite{Gehrmann2020digital}. For wireless communications, recent advances in sensing, three-dimensional (3D) reconstruction, and ray-tracing have made wireless digital twins (WDTs)\cite{Alkhateeb2023real, Khan2022digital} promising tools for site-specific analysis, management, and optimization\cite{Wang2026generative, Zhou2026beam}. More specifically, the value of WDTs mainly lies in the following aspects.
\begin{itemize}
	\item \textbf{\emph{From site-agnostic to site-specific:}} Many existing works rely on simplified analytical, stochastic, or parameterized channel models that are not explicitly tied to the specific deployment environment. By incorporating physical scene and propagation mechanism, WDTs provide a platform for site-specific wireless modeling and optimization, thereby improving the reliability of wireless system design.
	
	\item \textbf{\emph{Data source for model training:}} Learning-based wireless algorithms usually require large-scale datasets, which are difficult and costly to collect in the real world. WDTs can provide synthetic yet site-specific data for training and evaluating such algorithms, thereby reducing the dependence on extensive real-world data collection\cite{Jia2025urban, Wang2025deeptelecom}.
	
	\item \textbf{\emph{Safe and controllable validation:}} WDTs provide a sandbox-like virtual environment where communication algorithms and protocols can be safely validated and compared without disrupting live network operation. In particular, multiple what-if scenarios can be examined in a repeatable and controllable manner before real-world deployment\cite{Lin20236G}.
	
	\item \textbf{\emph{Closed-loop optimization:}} The optimized configurations obtained in WDT can be transferred to the physical network through a digital-physical interface\cite{IETF2022digital}. Meanwhile, real-time data interaction between the physical wireless network and its corresponding WDT also enables a closed-loop optimization process\cite{Fang2025sensing}.
\end{itemize}

Despite these advantages, the practical value of WDTs critically depends on the \emph{fidelity} of their underlying digital representation, including, for example, 3D geometry, material properties, ray-tracing configurations, and hardware modeling\cite{Luo2025digital}. In this context, fidelity should not be interpreted merely as visual or geometric realism, but rather as the ability of a WDT to preserve task-relevant wireless behaviors. Since signal propagation is highly sensitive to the surrounding environment, insufficient fidelity, resulting from imperfect geometric reconstruction, inaccurate material assignment, or parameter mismatch, may undermine the reliability of downstream tasks such as radio-map estimation, channel characterization and beamforming optimization. A straightforward solution is to uniformly refine all components to sufficiently high fidelity. However, constructing and maintaining such a high-fidelity WDT over a large-scale environment is highly resource-intensive, involving substantial costs in sensing, reconstruction, storage, updating, and simulation. This challenge becomes even more severe in complex urban scenarios under limited resource budget. As a result, a key question naturally arises: \emph{is wireless fidelity equally affected by all environmental components represented in the WDT?} If not, \emph{how should limited refinement resources be judiciously allocated across these components to preserve wireless fidelity as much as possible?} Nevertheless, existing WDT construction pipelines typically treat fidelity as a global design choice, while the task-dependent and component-level allocation of refinement resources remains unexplored.

\subsection{Related Work}
WDT is becoming increasingly important in next-generation wireless communication systems, where it serves as a site-specific platform for analysis, optimization, and evaluation. Existing studies on WDT-assisted wireless networks can be broadly divided into three categories: 1) the construction of WDTs\cite{Alkhateeb2019deepmimo, Jia2025urban, Wang2025deeptelecom, Yu2024channelGPT, Gong2025digital, Zhang2025denoising, Jiang2023digital}, 2) the exploitation of WDTs for network optimization and decision-making\cite{Cai2025digital, Belgiovine2026better, Karakelle2026ray, Alikhani2024digital, Zhang2026efficient, Fang2025optimizing, Shui2023cell, Cui2023digital, Ho2025ai}, and 3) the refinement and calibration of WDTs for enhancing the fidelity\cite{Luo2025digital,Ruah2024calibrating,Luo2026wireless}.

\subsubsection{The construction of WDTs}
Since WDTs are expected to provide site-specific digital representations of the physical wireless environments, a key research direction is to construct the WDTs and generate reliable data from them. As an early effort, a generic and parametrized deep learning dataset--DeepMIMO, was released in \cite{Alkhateeb2019deepmimo}, where the channel data were obtained via ray-tracing from Remcom Wireless InSite\cite{Remcom}. UrbanMIMOMap further extended this direction to dense urban radio-map learning by providing multi-input multi-output (MIMO) channel state information (CSI) maps and precoding-aware benchmarks\cite{Jia2025urban}. More recently, the authors of \cite{Wang2025deeptelecom} developed DeepTelecom, a WDT dataset generation pipeline, which constructs the third level of details (LoD3) indoor and outdoor scenes with material-parameterizable surfaces and produces multi-modal data. Beyond ray tracing-based data generation, recent studies have also explored generative AI (GenAI) for digital-twin channel generation. In \cite{Yu2024channelGPT}, a LLM-driven DT channel generator named ChannelGPT was introduced, which exploits multimodal data and environmental information to generate multi-scenario channel parameters. The authors of \cite{Gong2025digital} studied the construction of a DT of channel (DToC), where the positions of user terminal are regarded as physical objects and the corresponding CSI is modeled as virtual digital objects. A conditional diffusion model was adopted to learn the relationship between physical and virtual digital objects, enabling the generation of user-specific statistical CSI. Similarly in \cite{Zhang2025denoising}, a diffusion model-based digital-twining framework was developed for generating integrated sensing and communication (ISAC) MIMO channels. The practical value of WDT-generated data has also been validated in downstream learning tasks. In \cite{Jiang2023digital}, the CSI dataset generated by WDT was used to train a neural network (NN) for beam prediction. The results showed that the NN trained solely on WDT-generated data can still achieve promising performance on real-world measurements, highlighting the potential of WDTs to reduce the dependence on costly real-world data collection.

\subsubsection{The exploitation of WDTs}
A DT-assisted CSI prediction framework was proposed in \cite{Cai2025digital} to reduce the CSI acquisition overhead. This framework extracts an environment-specific channel subspace basis from a WDT as environment prior, and fuses it with partial real-time CSI to recover the full spatial-frequency channel. In \cite{Belgiovine2026better}, a WDT based on NVIDIA Sionna and Aerial Omniverse Digital Twin (AODT) was constructed to jointly optimize the positions of unmanned aerial vehicles (UAVs), antenna orientations, and transmit powers for wireless coverage. The authors in \cite{Karakelle2026ray} constructed a DT by leveraging the NVIDIA Sionna based on 3D map, and then used this DT to optimize the phase of reconfigurable intelligent surface (RIS). With the help of DT, the extensive signaling overhead associated with estimating high-dimensional RIS channels can be eliminated. Also focusing on DT-assisted RIS communications, in \cite{Alikhani2024digital}, DT was adopted for beamforming and interference Management. Furthermore, a DT-specific robust transmission was designed by considering the channel approximation errors from the DT. Beyond providing site-specific information to assist wireless network optimization, WDTs can also serve as virtual environments for training reinforcement learning (RL) agents. In \cite{Zhang2026efficient}, a conditional generative adversarial network (cGAN)-based DT module was constructed to generate virtual state-action transition pairs. A offline DRL-based policy was trained on this DT for efficient beam selection in cell-free ISAC system. Focusing on codebook design problem, the DRL agent subsequently explored optimal beam patterns efficiently in the DT\cite{Fang2025optimizing}, which avoided extensive real-world interactions. The concept of leveraging DT as a virtual training environment for RL can also be found in \cite{Shui2023cell, Cui2023digital, Ho2025ai}.

\subsubsection{The refinement and calibration of WDTs}
The third research direction shifts the focus from constructing or exploiting
WDTs to improving the fidelity of the WDT itself, since the effectiveness of WDT critically depends on the consistency between the digital representation and the physical wireless environment. In \cite{Luo2025digital}, instead of treating the WDT as a perfect data generator, the authors explicitly investigated how mismatches in geometry, material properties, ray-tracing configuration, and hardware modeling affect the reliability of DT-generated CSI data and the performance of downstream CSI feedback task. In \cite{Ruah2024calibrating}, an approach based on the variational expectation maximization algorithm was proposed for ray tracing calibration. In \cite{Luo2026wireless}, a learning-based calibration framework was proposed to refine the DFT-domain channel information generated by a low-complexity WDT. Although these studies highlight the importance of WDT fidelity and provide effective calibration mechanisms, they mainly focus on either analyzing global fidelity factors or correcting the generated channel information after the WDT has been constructed. In contrast, the task-oriented and resource-constrained refinement of the WDT itself remains largely unexplored. In particular, it is still unclear how limited sensing,
reconstruction, and storage resources should be allocated across environmental components, so as to preserve the wireless-task fidelity of the refined WDT.

\subsection{Contributions}
Motivated by the aforementioned gap, this work investigates nonuniform refinement for WDTs, where limited resources are selectively allocated to the WDT components that matter most to wireless fidelity. Specifically, the contributions of this paper are summarized as follows:
\begin{itemize}
	\item We propose a unified task-oriented nonuniform refinement framework for WDT. Instead of treating WDT fidelity as a single global parameter, the proposed framework models fidelity as a fine-grained optimization vector across environmental representations, propagation models, and ray-tracing configurations. Based on this modeling, WDT refinement is formulated as a resource-constrained problem for optimizing task-oriented wireless fidelity, which provides a general principle for allocating limited refinement resources.
	
	\item The proposed framework is instantiated for building-level geometry refinement in urban WDTs. A building-removal ablation study is first conducted on reference high-fidelity WDTs to show that buildings have heterogeneous and deployment-dependent impacts on task-oriented wireless fidelity. Motivated by this observation, we develop an ellipsoid-guided selective refinement (EGSR) algorithm. Specifically, a propagation-relevant ellipsoid is introduced to characterize the spatial region in which bounded-length Tx-Rx propagation interactions may occur. Based on this geometric characterization, the refinement priorities of buildings can be estimated directly from the low-fidelity WDT, without accessing high-fidelity scenario or exhaustive ray-tracing ablation.
	
	\item We evaluate the proposed EGSR algorithm in realistic dense urban scenarios in Hong Kong. Numerical results show that EGSR consistently outperforms baseline methods in radio map reconstruction accuracy, while achieving performance close to full-scene uniform refinement by refining only a very small fraction of buildings. Furthermore, EGSR achieves a more favorable beamforming gain distribution, indicating that the refined WDT can produce beamforming vectors that remain largely effective in the target high-fidelity environment. These results confirm that task-oriented nonuniform refinement can provide a cost-effective alternative to full-scene high-fidelity reconstruction for WDTs.
\end{itemize}

The remainder of this paper is organized as follows. Section \ref{SectionII} models the WDT scenario and introduces a unified task-oriented and resource-constrained WDT refinement problem. Section \ref{SectionIII} conducts a building-removal ablation study in a high-fidelity WDT to reveal the heterogeneous and deployment-dependent impacts of different buildings on wireless fidelity. Section \ref{SectionIV} introduces the concept of propagation-relevant ellipsoid and presents the proposed EGSR algorithm. Section \ref{SectionV} includes the simulation setup and discussions of the results. The conclusion of this paper is provided in Section \ref{SectionVI}.

\textit{Notations:} 
Scalars, vectors, and matrices are denoted by regular, bold lowercase, and bold uppercase letters, respectively. Sets are denoted by calligraphic letters, and $|\mathcal{X}|$ denotes the cardinality of set $\mathcal{X}$. The real-valued $n$-dimensional Euclidean space is denoted by $\mathbb{R}^{n}$. The transpose and conjugate transpose operators are denoted by $(\cdot)^{{\rm T}}$ and $(\cdot)^{{\rm H}}$, respectively. The absolute value and Euclidean norm are denoted by $|\cdot|$ and $\|\cdot\|$, respectively. The expectation operator is denoted by $\mathbb{E}[\cdot]$, and the indicator function is denoted by $\mathbb{I}(\cdot)$. The volume of a three-dimensional region is denoted by $\operatorname{Vol}(\cdot)$. 
The relation $\preceq$ denotes element-wise inequality. 

\section{WDT Scenario and Unified Task-Oriented Refinement Formulation}\label{SectionII}
In this section, we first provide the system model for WDT. Subsequently, a unified optimization framework for task-oriented nonuniform WDT refinement is proposed.

\subsection{WDT Scenario}
We consider a wireless communication system deployed in a physical environment, denoted by $\mathcal{E}^{\star}$. The physical environment consists of a collection of environmental objects and their associated attributes. Specifically, it is modeled as
\begin{equation}
	\mathcal{E}^{\star}=\left\{ \left(o_i^{\star}, \mathcal{A}_i^{\star}\right) \right\}_{i=1}^M,
\end{equation}
where $M$ denotes the number of considered environmental objects, $o_i^{\star}$ is the $i$-th environmental object which can be a building\footnote{The granularity of an environmental object depends on the modeling resolution of WDT. For example, in high-fidelity WDT, an object may represent a building substructure, such as a window, facade, and rooftop.}, vehicle, tree, or other object that may affect wireless propagation. $\mathcal{A}_i^{\star}$ denotes the attribute descriptor of $o_i^{\star}$, which can be expressed as
\begin{equation}
	\mathcal{A}_i^{\star} = \left(\mathcal{G}_i^{\star}, \mathcal{M}_i^{\star}, \mathcal{S}_i^{\star}\right),
\end{equation}
with $\mathcal{G}_i^{\star}, \mathcal{M}_i^{\star}$, and $\mathcal{S}_i^{\star}$ being the geometry, material properties, and dynamic state of component $o_i^{\star}$, respectively. As mentioned above, a WDT provides a digital replica of the physical communication environment, thus, it can be represented by
\begin{equation}
	\mathcal{D}=\left\{\widehat{\mathcal{E}}, \widehat{\mathcal{P}}, \boldsymbol{\theta} \right\},
\end{equation}
where $\widehat{\mathcal{E}}$ is the digital representation of the physical environment $\mathcal{E}^{\star}$, $\widehat{\mathcal{P}}$ denotes the adopted wireless propagation model, including the considered propagation mechanisms such as line-of-sight (LoS), reflection, diffraction, scattering, and penetration, and $\boldsymbol{\theta}$ collects the numerical configurations of the propagation solver. For example, $\widehat{\mathcal{P}}$ can be instantiated using the existing ray-tracing tools such as NVIDIA Sionna RT\cite{Aoudia2025sionna} and Remcom Wireless InSite\cite{Remcom}, while $\boldsymbol{\theta}$ may include the maximum interaction depth, sampling density, and other solver-related parameters.

The above WDT model is a structured digital representation of multiple environmental objects, propagation models, and solver configurations, which means that each of these elements can be constructed with different levels of fidelity. Therefore, WDT fidelity is not necessarily a global parameter over the entire DT scenario. Instead, it can be controlled in a fine-grained manner across different components and modeling dimensions. To capture this property, in this work, we introduce a fidelity allocation variable to characterize the controllable fidelity of a WDT. Specifically, let $\boldsymbol{\sigma}=\left(\boldsymbol{\sigma}_{\mathcal{E}}, \sigma_{\mathcal{P}}, \boldsymbol{\sigma}_{\theta} \right)$ where $\boldsymbol{\sigma}_{\mathcal{E}}$ controls the fidelity of digital environmental representation, $\sigma_{\mathcal{P}}$ controls the fidelity of the propagation model, and $\boldsymbol{\sigma}_{\theta}$ controls the numerical fidelity of the propagation solver. Accordingly, a fidelity-controlled WDT can be built as
\begin{equation}
	\label{eq:fidelity_controlled_WDT}
	\mathcal{D}(\boldsymbol{\sigma}) = \left\{\widehat{\mathcal{E}}(\boldsymbol{\sigma}_{\mathcal{E}}), \widehat{\mathcal{P}}(\sigma_{\mathcal{P}}), \boldsymbol{\theta}(\boldsymbol{\sigma}_{\theta})\right\}.
\end{equation}
In (\ref{eq:fidelity_controlled_WDT}), the environmental fidelity variable $\boldsymbol{\sigma}_{\mathcal{E}}$ can be further decomposed at the object level. Specifically, the digital environment is represented as
\begin{equation}\label{eq:digital_env}
	\widehat{\mathcal{E}}(\boldsymbol{\sigma}_{\mathcal{E}})=\left\{ \left(\widehat{o}_i, \widehat{\mathcal{A}}_i(\boldsymbol{\sigma}_i)\right)\right\}_{i=1}^{M_{\mathcal{D}}},
\end{equation}
where $M_{\mathcal{D}}$ is the number of digital objects in WDT. It should be noted that the digital objects in $\widehat{\mathcal E}$ are not required to have a strict one-to-one correspondence with the physical objects in $\mathcal E^\star$, since practical WDT construction may involve object merging, missing objects, over-segmentation, or reconstruction artifacts, leading to $M_{\mathcal{D}}\neq M$. The refinement variables are therefore defined over the controllable components in the digital representation. In (\ref{eq:digital_env}), $\boldsymbol{\sigma}_{\mathcal{E}}=(\boldsymbol{\sigma}_1, \cdots, \boldsymbol{\sigma}_{M_{\mathcal{D}}})$, where $\boldsymbol{\sigma}_i$ denotes the fidelity allocation for the $i$-th digital object. In general, it can be expressed as $\boldsymbol{\sigma}_i = \left(\sigma_i^{\mathcal{G}}, \sigma_i^{\mathcal{M}}, \sigma_i^{\mathcal{S}}\right)$, with $\sigma_i^{\mathcal{G}}$, $\sigma_i^{\mathcal{M}}$, and $\sigma_i^{\mathcal{S}}$ controlling the fidelity of geometry, material properties, and dynamic state of the $i$-th component, respectively. Thus, we have
\begin{equation}
	\widehat{\mathcal{A}}_i(\boldsymbol{\sigma}_i) = \left(\widehat{\mathcal{G}}_i(\sigma_i^{\mathcal{G}}), \widehat{\mathcal{M}}_i(\sigma_i^{\mathcal{M}}), \widehat{\mathcal{S}}_i(\sigma_i^{\mathcal{S}}) \right).
\end{equation} 
A higher fidelity generally corresponds to a more accurate but more resource-consuming representation. For example, increasing $\sigma_i^{\mathcal{G}}$ may correspond to using a denser point cloud or a more detailed mesh to represent the geometry, increasing $\sigma_i^{\mathcal{M}}$ may correspond to using a more accurate material calibration, and increasing $\sigma_i^{\mathcal{S}}$ indicates a more frequently updated dynamic state.

According to \cite{Luo2025digital}, given a transmitter (Tx) located at $\mathbf{x}_T \in \mathbb{R}^{3\times 1}$ and a receiver (Rx) at $\mathbf{x}_R\in \mathbb{R}^{3\times 1}$, the site-specific channel predicted by the WDT is expressed as
\begin{equation}
	\widehat{\mathbf{h}}_u(\boldsymbol{\sigma}) = \mathcal{H}\left(\widehat{\mathcal{E}}(\boldsymbol{\sigma}_{\mathcal{E}}), \widehat{\mathcal{P}}(\sigma_{\mathcal{P}}), \boldsymbol{\theta}(\boldsymbol{\sigma}_{\theta}), \mathbf{x}_{T}, \mathbf{x}_{R}, \xi\right),
\end{equation}
where $\xi$ may include the carrier frequency, antenna configuration,
service requirements, and other related parameters. $\mathcal{H}(\cdot)$ denotes the channel generation operator that maps the digital environment, propagation model, and solver configuration to the site-specific wireless channel.

\subsection{A Unified Task-Oriented WDT Refinement Formulation}\label{SectionII.B}
The WDT model defined in the previous subsection indicates that refinement can be performed over different modeling dimensions, including environmental attributes, propagation models, and solver configurations. However, in practice, constructing a WDT from scratch with the highest fidelity is often prohibitively expensive. A more realistic paradigm is to first establish an initial low-fidelity WDT and then progressively refine it according to the requirement of the wireless communication tasks. However, due to limited resources, it is generally infeasible to refine all WDT components to the highest fidelity. This motivates a resource-constrained WDT refinement problem, where limited refinement resources should be allocated to the components and attributes that are most critical to the wireless task. Let $\boldsymbol{\sigma}^{(0)}$ be the fidelity allocation of the initial WDT. The goal of WDT refinement is to determine a fidelity allocation $\boldsymbol{\sigma}$ , with $\boldsymbol{\sigma}^{(0)} \preceq \boldsymbol{\sigma}$, such that the refined WDT \(\mathcal{D}(\boldsymbol{\sigma})\) approaches the high-fidelity WDT from the perspective of a specific wireless task, rather than from a purely visual perspective.

First, let $\zeta=(\mathcal{X}_T,\mathcal X_R,\xi)$ be a random variable that denotes the settings of a wireless communication system, where $\mathcal{X}_T$ and $\mathcal X_R$ are the sets of Tx and Rx locations, respectively. The distribution $p$ characterizes the occurrence probability of different settings in a system. Therefore, the proposed unified task-oriented WDT refinement problem can be formulated as
\begin{subequations}
\begin{align}
	(\textbf{P1}): \min_{\boldsymbol{\sigma}} \quad & \mathbb{E}_{\zeta\sim p}\left[\mathcal{L}_{\mathcal{T}}\left(\mathcal{D}(\boldsymbol{\sigma}), \zeta \right)\right] \\
	\mathrm{s.t.} \quad & C(\boldsymbol{\sigma};\boldsymbol{\sigma}^{(0)}) \leq W, \label{constraint1}\\
	& \boldsymbol{\sigma}^{(0)} \preceq \boldsymbol{\sigma}.  \label{constraint2}
\end{align}
\end{subequations}
The objective function $\mathcal{L}_{\mathcal{T}}$ in (\textbf{P1}) denotes the task-oriented loss, which measures the performance discrepancy caused by the fidelity mismatch between the refined WDT \(\mathcal{D}(\boldsymbol{\sigma})\) and the ground-truth scene or reference high-fidelity WDT under the specific wireless task \(\mathcal{T}\). $C(\boldsymbol{\sigma};\boldsymbol{\sigma}^{(0)})$ in (\ref{constraint1}) characterizes the additional refinement cost, and \(W\) is the available resource budget. (\ref{constraint2}) ensures that each controllable fidelity dimension can only be maintained or improved. It is worth noting that $\mathcal{L}_{\mathcal{T}}$ is task-dependent and can be instantiated according to the wireless application of interest. For example, for radio map prediction, we have
\begin{equation}
	\mathcal{L}_{\mathrm{RM}}\left(\mathcal{D}(\boldsymbol{\sigma}), \zeta\right)=\frac{1}{|\mathcal{X}_R|}\sum_{\mathbf{x}_R\in\mathcal{X}_R}\left|{\rm RM}_{\mathcal{D}(\boldsymbol{\sigma})}(\zeta)-{\rm RM}^\star(\zeta)\right|,
\end{equation}
where ${\rm RM}_{\mathcal{D}(\boldsymbol{\sigma})}(\mathbf{x}_R; \zeta)$ and ${\rm RM}^\star(\mathbf{x}_R; \zeta)$ are radio maps calculated based on WDT and real world, respectively. Similarly, when we consider site-specific beamforming (SSBF) task\cite{Wang2026generative}, the objective function can be as follows:
\begin{align}
	&\mathcal{L}_{\mathrm{SSBF}}\left(\mathcal{D}(\boldsymbol{\sigma}),\zeta\right) \nonumber\\
	&=\frac{1}{|\mathcal{X}_R|}\sum_{\mathbf{x}_R\in\mathcal{X}_R}\left(1-\frac{\vert \mathbf{w}^{{\rm H}}_{\mathcal{D}(\boldsymbol{\sigma})}(\zeta) \mathbf{h}^\star(\zeta) \vert^2}{\Vert \mathbf{h}^\star(\zeta) \Vert^2}\right),
\end{align}
where $\mathbf{w}_{\mathcal{D}(\boldsymbol{\sigma})}(\zeta)$ is the unit-norm beamforming vector derived based on the constructed WDT, and $\mathbf{h}^\star(\zeta)$ is the ground-truth channel vector of user locates at $\mathbf{x}_R$.

The above formulation provides a general framework for task-oriented WDT refinement. In this work, we instantiate this framework for building-level geometry refinement in urban WDTs. This focus is motivated by two key observations. \emph{First}, 3D geometric modeling is one of the most resource-intensive parts of urban WDT construction, since acquiring, reconstructing, and storing city-scale 3D geometry require substantial sensing, computational, and storage resources. \emph{Second}, 3D geometry is among the dominant environmental components affecting wireless propagation\cite{Luo2025digital}, especially through LoS blockage and non-line-of-sight (NLoS) interactions such as reflection and scattering. Therefore, the trade-off between the high resource cost of geometry refinement and its significant impact on wireless propagation makes strategic geometry-level refinement particularly important.

\section{Are All Buildings Equally Important to Wireless Fidelity?}\label{SectionIII}
The task-oriented WDT refinement problem formulated in Section \ref{SectionII.B} is motivated by a key premise: \emph{different environmental objects do not contribute equally to the wireless fidelity of the tasks.} If all objects contribute similarly to the considered wireless task, a uniform refinement strategy would be sufficient. Since this work focuses on building-level geometry refinement in WDTs, we first examine whether different buildings indeed exhibit heterogeneous impacts on wireless propagation.

To this end, we conduct a building-removal ablation study on a reference high-fidelity WDT $\mathcal{D}^{\star}$. Each building is removed individually while keeping all other components unchanged. The resulting deviation in wireless metrics, compared with the original high-fidelity WDT, is used as an oracle measure of the marginal impact of the removed building. It is worth emphasizing that this procedure is not intended as a practical refinement guidance, since the high-fidelity WDT is generally unavailable during practical WDT construction. Instead, it serves to provide empirical evidence for building-level importance heterogeneity and to motivate the low-fidelity refinement algorithm developed in the next section.

Specifically, let $\mathcal{B}=\{B_1,B_2,\ldots,B_{N_b}\}$ denote the set of buildings in the high-fidelity WDT with the total number of $N_b$. For each building component $B_i\in \mathcal{B}$, an ablated WDT can be constructed, denoted by $\mathcal{D}^{\star}_{-i}$, by removing $B_i$ from the original high-fidelity WDT $\mathcal{D}^{\star}$ while keeping all other components unchanged. In this section, we quantify the impact of building removal on path gain fidelity, which is useful for radio map prediction. Let $P^{\star}(\mathbf{x}_R)$ and $P^{\star}_{-i}(\mathbf{x}_R)$ denote the path gains in dB at a Rx location $\mathbf{x}_R \in \mathcal{X}_R$ in $\mathcal{D}^{\star}$ and $\mathcal{D}^{\star}_{-i}$, respectively. The path gain deviation caused by removing building $B_i$ can thus be defined as
\begin{equation}
	\Delta_i^{\rm PG}(\mathbf{x}_R) = \left|P^{\star}(\mathbf{x}_R) - P^{\star}_{-i}(\mathbf{x}_R)\right|.
\end{equation}
To evaluate the importance of the building component, we consider both impact strength and range. The impact strength of building $B_i$ in terms of path gain is expressed is
\begin{equation}
	\label{eq:path_gain_impact_strength}
	S_i^{\rm PG}=\max_{\mathbf{x}_R \in\mathcal{X}_R} \Delta_i^{\rm PG}(\mathbf{x}_R),
\end{equation}
which captures the maximum path gain deviation over all considered Rx locations. In addition, given a deviation threshold $\tau_{\rm PG}$, the corresponding impact range is defined as
\begin{equation}
	\label{eq:path_gain_impact_range}
	A_i^{\rm PG}(\tau_{\rm PG})=\frac{1}{|\mathcal{X}_R|}\sum_{\mathbf{x}_R \in\mathcal{X}_R}\mathbb{I}\left(\Delta_i^{\rm PG}(\mathbf{x}_R) \geq \tau_{\rm PG} \right),
\end{equation}
where $\mathbb{I}(\cdot)$ is the indicator function which returns one if the condition inside the braces is satisfied and zero otherwise. (\ref{eq:path_gain_impact_range}) measures the fraction of receivers whose path gains are significantly affected by the removal of building $B_i$.

\begin{figure}[t]
	\centering
	\includegraphics[width=3.5in]{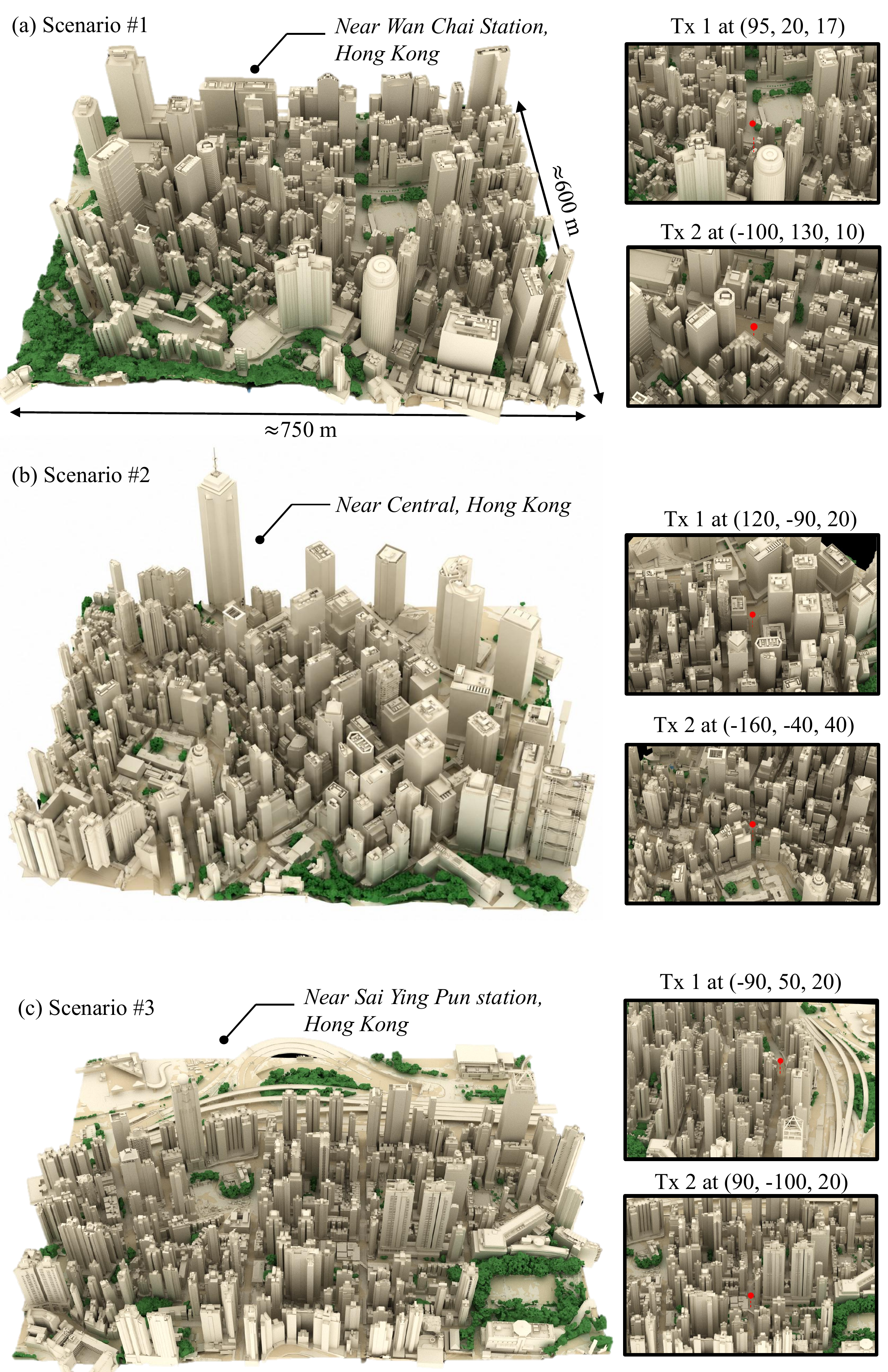}
	\caption{The considered urban scenarios.}
	\label{fig:scenarios}
\end{figure}

Next, we conduct the ablation study in realistic urban scenarios. As shown in Fig. \ref{fig:scenarios}, three dense urban scenes are considered, where buildings exhibit highly irregular spatial layouts and large height variations. The high-fidelity WDTs are constructed by importing the 3D city models into NVIDIA Sionna. Each building is treated as an individual removable component, while the terrain and other environmental objects like trees are kept unchanged during the ablation. For each scene, two Tx locations are selected to examine whether the building-level nonuniform importance pattern exists across different deployment conditions. Rx locations are sampled on the terrain measurement surface, and the path gain of each Tx-Rx pair is computed by Sionna under the same ray-tracing configuration, which is summarized in Table~\ref{tab:sionna_config}.

\begin{figure*}[t]
	\centering
	\includegraphics[width=7.3in]{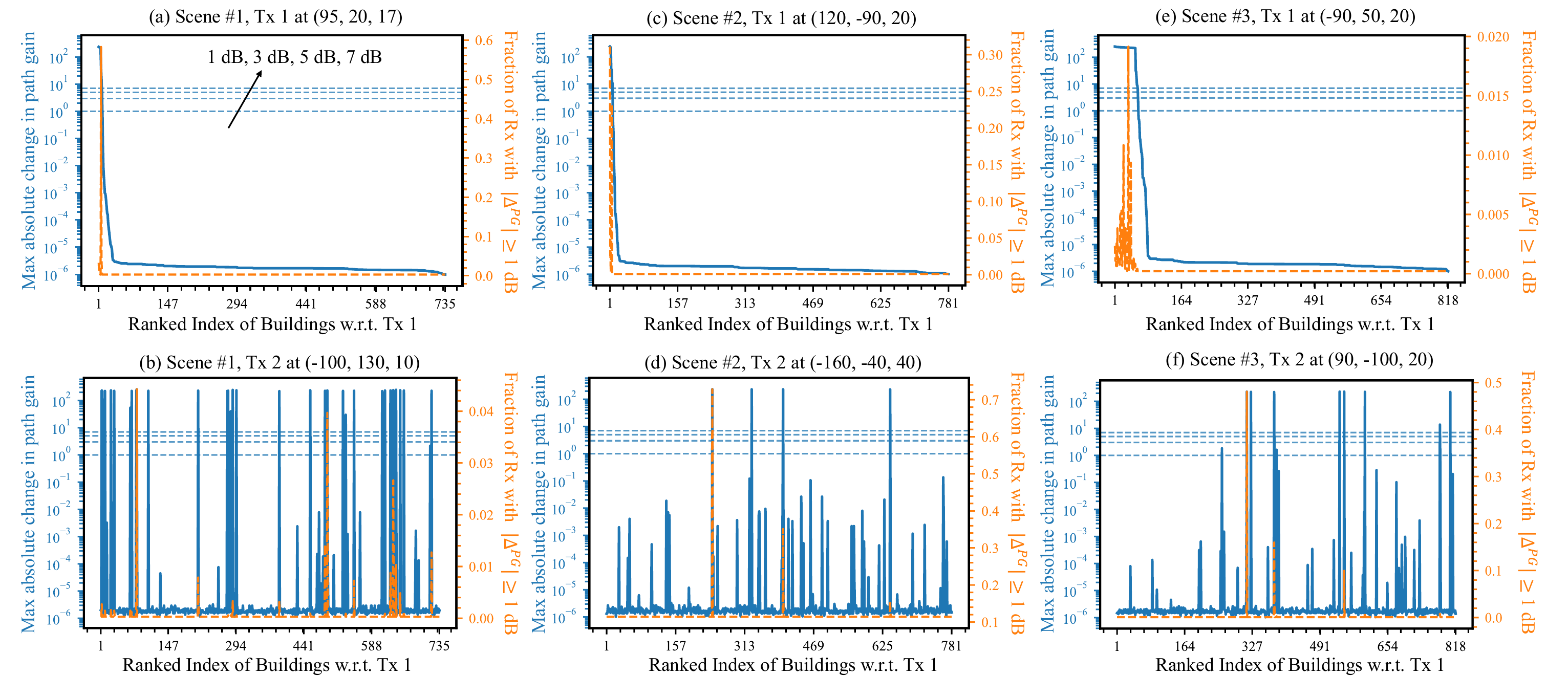}
	\caption{Building-level path gain impact profiles under different urban scenarios and Tx deployments. Buildings are indexed according to the path-gain impact strength under the Tx 1 deployment in each scenario.}
	\label{fig:importance}
\end{figure*}

\begin{table}[t]
	\centering
	\caption{Sionna Configuration}
	\label{tab:sionna_config}
	\begin{tabular}{ll}
		\hline
		Parameter & Value \\
		\hline
		Carrier frequency & $3.5$ GHz \\
		Tx antenna array & $1\times64$ ULA \\
		Element spacing & Half-wavelength \\
		Antenna pattern & TR 38.901 \\
		Polarization & Vertical \\
		Rx height offset & $1.5$ m above terrain \\
		Samples per Tx & $10^6$ \\
		Maximum interaction depth & $3$ \\
		Enabled propagation effects & LoS, specular reflection \\
		Coverage threshold & $-80$ dB \\
		\hline
	\end{tabular}
\end{table}

Using the path-gain impact metrics defined in (\ref{eq:path_gain_impact_strength}) and (\ref{eq:path_gain_impact_range}), Fig.~\ref{fig:importance} illustrates the building-level importance profiles under three different scenes, each with two Tx deployments. For each scene, the building indices are sorted according to $S_i^{\rm PG}$ under the first Tx deployment, which defines a reference order. The same ordering is then reused when plotting the results under the second Tx deployment. The blue line represents the impact strength $S_i^{\rm PG}$, and the horizontal dashed lines at 1, 3, 5, and 7 dB are used as visual references for interpreting the magnitude of $S_i^{\rm PG}$. The orange dashed line represents the impact range $A_i^{\rm PG}(1~{\rm dB})$. From Fig.~\ref{fig:importance}, two important observations can be made as follows:

First, \emph{building-level importance is highly nonuniform in all considered scenes}. For the first Tx deployment in each scene, i.e., fig. \ref{fig:importance}(a), \ref{fig:importance}(c), and \ref{fig:importance}(e), the impact-strength curves exhibit a clear heavy-tailed pattern. Only a small subset of buildings leads to large path-gain deviations after removal, whereas the majority of buildings have negligible marginal impacts. A similar trend can also be observed from the impact-range curves $A_i^{\rm PG}(1~{\rm dB})$. This indicates that wireless fidelity is not uniformly contributed by all buildings, instead, it is mainly governed by a small number of propagation-critical buildings.

Second, \emph{the set of propagation-critical buildings is strongly dependent on the Tx deployment}. Since the building order in each scene is determined by the first Tx deployment and then reused for the second one, the non-monotonic and spike-like profiles observed in Figs. \ref{fig:importance}(b), \ref{fig:importance}(d), and \ref{fig:importance}(f) show that buildings important for one Tx deployment are not necessarily important for another. This deployment-dependent mismatch suggests that building importance is jointly determined by the geometry layout, Tx position, Rx distribution, and propagation conditions. 

These observations provide empirical support for the key premise of the task-oriented WDT refinement formulated in (\textbf{P1}): at the geometry level, wireless fidelity is not equally affected by all environmental components, but is mainly governed by a small subset of propagation-critical buildings. Moreover, the importance of these buildings is not solely a static property of the scene geometry, but also depends on the Tx deployment and the corresponding Tx-Rx propagation conditions. Since high-fidelity WDT is not readily available in practice, this motivates the development of a low-cost building ranking method that can estimate such propagation relevance directly from a low-fidelity WDT.

\section{Propagation-Relevance Ellipsoid Guided Building Refinement}\label{SectionIV}
The ablation study in Section \ref{SectionIII} shows that buildings exhibit highly nonuniform impacts on wireless fidelity. However, such importance ranking cannot be directly used for practical refinement, since it requires access to the high-fidelity WDT, which is generally unavailable during practical city-scale WDT construction. Instead, due to limitations in point-cloud density, reconstruction accuracy, and storage/computational resources, the available WDT is usually of low fidelity. Therefore, this section develops a geometry-aware refinement algorithm that estimates the propagation relevance of each building directly from the low-fidelity WDT.

\subsection{Propagation-Relevance Ellipsoid}
To develop a low-cost criterion for estimating building-level propagation relevance, we first identify the spatial region in which potential Tx-Rx propagation interactions may occur. Instead of explicitly tracing all possible multipath components, we characterize such a region from a bounded path-length perspective, which leads to the following definition.

\begin{definition}\label{Definition1}
	For a Tx located at $\mathbf{x}_T$ and a Rx at $\mathbf{x}_R$, the propagation-relevance ellipsoid with maximum excess path length $\Delta$ is defined as
	\begin{equation}
		\setlength\abovedisplayskip{3pt}
		\setlength\belowdisplayskip{3pt}
		\Omega_\Delta(\mathbf{x}_T, \mathbf{x}_R)\!=\!\{\mathbf{x}\in\mathbb{R}^3 \!\!: \!\! \Vert \mathbf{x} - \mathbf{x}_T \Vert + \Vert \mathbf{x} - \mathbf{x}_R \Vert \! \leq \! \Vert \mathbf{x}_T \!-\! \mathbf{x}_R \Vert + \Delta \},
	\end{equation}
	where $\Delta \ge 0$ controls the maximum allowable path-length excess with respect to the direct Tx-Rx Euclidean distance $\|\mathbf{x}_T-\mathbf{x}_R\|$. 
\end{definition}

Let $\mathcal{P}=(\mathbf{x}_T, \mathbf{p}_1, \mathbf{p}_2, \cdots, \mathbf{p}_K, \mathbf{x}_R)$ denote an arbitrary finite-order propagation path from $\mathbf{x}_T$ to $\mathbf{x}_R$, where $\mathbf{p}_k\in \mathbb{R}^{3\times 1}, k=1,\cdots,K$ is the coordinates of the interaction points, with $K$ being the number of interaction points, which may correspond to reflection and scattering. The total path length when $K>0$ can be given as
\begin{equation}
	\setlength\abovedisplayskip{3pt}
	\setlength\belowdisplayskip{3pt}
	L(\mathcal{P}) = \Vert \mathbf{p}_1 - \mathbf{x}_T \Vert + \sum_{k=1}^{K-1} \Vert \mathbf{p}_{k+1} - \mathbf{p}_{k} \Vert + \Vert \mathbf{x}_R - \mathbf{p}_{K}\Vert.
\end{equation}
When $K=0$, the path reduces to the direct Tx-Rx segment with $L(\mathcal{P}) = \Vert \mathbf{x}_R - \mathbf{x}_T \Vert$. Thus, given the definition of the propagation path, we introduce the following lemma.

\begin{lemma}\label{Lemma1}
	If a propagation path $\mathcal{P}$ satisfies $L(\mathcal{P}) \leq \Vert \mathbf{x}_T - \mathbf{x}_R \Vert + \Delta$, then every interaction point $\mathbf{p}_k, k=1,2,\cdots,K$ of the path, if any, lies within the propagation-relevance ellipsoid $\Omega_\Delta(\mathbf{x}_T, \mathbf{x}_R)$ defined in \textbf{Definition} \ref{Definition1}.
\end{lemma}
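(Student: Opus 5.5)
The plan is to apply the triangle inequality to the two sub-paths on either side of a fixed interaction point. We fix an arbitrary index $k\in\{1,\dots,K\}$, assuming $K\ge 1$. If $K=0$ there are no interaction points and the statement holds vacuously.

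We split the path $\mathcal{P}$ at $\mathbf{p}_k$ into two pieces:
\begin{itemize}
\item the prefix $(\mathbf{x}_T,\mathbf{p}_1,\dots,\mathbf{p}_k)$;
\item the suffix $(\mathbf{p}_k,\dots,\mathbf{p}_K,\mathbf{x}_R)$.
\end{itemize}
Then $L(\mathcal{P})$ is exactly the sum of the two polygonal lengths. In the degenerate cases $k=1$ or $k=K$, the corresponding inner sum is empty.

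Next we apply the triangle inequality repeatedly along each polygonal chain. This shows that each chain is no shorter than the straight segment joining its endpoints:
\begin{equation*}
\|\mathbf{p}_1-\mathbf{x}_T\|+\sum_{j=1}^{k-1}\|\mathbf{p}_{j+1}-\mathbf{p}_j\|\ge\|\mathbf{p}_k-\mathbf{x}_T\|,
\end{equation*}
\begin{equation*}
\sum_{j=k}^{K-1}\|\mathbf{p}_{j+1}-\mathbf{p}_j\|+\|\mathbf{x}_R-\mathbf{p}_K\|\ge\|\mathbf{x}_R-\mathbf{p}_k\|.
\end{equation*}
Each inequality can be proved by a short induction on the number of segments.

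Adding the two inequalities gives
\[
\|\mathbf{p}_k-\mathbf{x}_T\|+\|\mathbf{p}_k-\mathbf{x}_R\|\le L(\mathcal{P}).
\]
Combining this with the hypothesis $L(\mathcal{P})\le\|\mathbf{x}_T-\mathbf{x}_R\|+\Delta$ yields
\[
\|\mathbf{p}_k-\mathbf{x}_T\|+\|\mathbf{p}_k-\mathbf{x}_R\|\le\|\mathbf{x}_T-\mathbf{x}_R\|+\Delta.
\]
This is precisely the membership condition for $\Omega_\Delta(\mathbf{x}_T,\mathbf{x}_R)$ in Definition~\ref{Definition1}. Since $k$ was arbitrary, every interaction point lies in the ellipsoid.

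There is no real obstacle in this argument. The only point needing care is the bookkeeping of the index splitting in the boundary cases $k=1$ and $k=K$, where one of the sums is empty, together with the vacuous case $K=0$.
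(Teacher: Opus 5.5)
Your proposal is correct and follows the paper's proof essentially step for step. You split the path at $\mathbf{p}_k$, use the triangle inequality to bound each sub-path length below by the straight-line distance, add the two bounds, and invoke the hypothesis; your explicit handling of the empty-sum cases $k=1$, $k=K$ and the vacuous case $K=0$ only makes explicit bookkeeping that the paper leaves implicit.
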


\begin{IEEEproof}
	Consider an arbitrary interaction point $\mathbf{p}_k, k=1,2,\cdots,K$ on the path. The path length from $\mathbf{x}_T$ to $\mathbf{p}_{k}$ is
	\begin{equation}
		\setlength\abovedisplayskip{3pt}
		\setlength\belowdisplayskip{3pt}
		L(\mathcal{P}_{T,k}) = \Vert \mathbf{p}_1 - \mathbf{x}_T \Vert + \sum_{j=1}^{k-1} \Vert \mathbf{p}_{j+1} - \mathbf{p}_{j} \Vert.
	\end{equation}
	and the remaining path length from $\mathbf{p}_{k}$ to $\mathbf{x}_R$ can be expressed as
	\begin{equation}
		\setlength\abovedisplayskip{3pt}
		\setlength\belowdisplayskip{3pt}
		L(\mathcal{P}_{k,R}) = \sum_{j=k}^{K-1} \Vert \mathbf{p}_{j+1} - \mathbf{p}_{j} \Vert + \Vert \mathbf{x}_R - \mathbf{p}_{K} \Vert. 
	\end{equation}
	By construction, we have
	\begin{equation}
		\setlength\abovedisplayskip{3pt}
		\setlength\belowdisplayskip{3pt}
		\label{eq:total_path_length}
		L(\mathcal{P}) = L(\mathcal{P}_{T,k}) + L(\mathcal{P}_{k,R}).
	\end{equation}
	Since $L(\mathcal{P}_{T,k})$ is the length of an actual path connecting $\mathbf{x}_T$ and $\mathbf{p}_k$, based on the triangle inequality, the Euclidean distance between $\mathbf{x}_T$ and $\mathbf{p}_k$ cannot exceed this path length. Therefore, we have
	\begin{equation}
		\setlength\abovedisplayskip{3pt}
		\setlength\belowdisplayskip{3pt}
		\Vert \mathbf{p}_{k} - \mathbf{x}_T \Vert \leq L(\mathcal{P}_{T,k}).
	\end{equation}
	Similarly, since $L(\mathcal{P}_{k,R})$ is the length of an actual path connecting $\mathbf{p}_k$ and $\mathbf{x}_R$, we have
	\begin{equation}
		\setlength\abovedisplayskip{3pt}
		\setlength\belowdisplayskip{3pt}
		\Vert \mathbf{p}_{k} - \mathbf{x}_R \Vert \leq L(\mathcal{P}_{k,R}).
	\end{equation}
	Based on (\ref{eq:total_path_length}) we will obtain that 
	\begin{equation}
		\setlength\abovedisplayskip{3pt}
		\setlength\belowdisplayskip{3pt}
		\Vert \mathbf{p}_{k} - \mathbf{x}_T \Vert + \Vert \mathbf{p}_{k} - \mathbf{x}_R \Vert \leq L(\mathcal{P}_{T,k}) + L(\mathcal{P}_{k,R}) = L(\mathcal{P}).
	\end{equation}
	Given that the path length satisfies $L(\mathcal{P}) \leq \Vert \mathbf{x}_T - \mathbf{x}_R \Vert + \Delta$, it follows that
	\begin{equation}
		\setlength\abovedisplayskip{3pt}
		\setlength\belowdisplayskip{3pt}
		\Vert \mathbf{p}_{k} - \mathbf{x}_T \Vert + \Vert \mathbf{p}_{k} - \mathbf{x}_R \Vert \leq \Vert \mathbf{x}_T - \mathbf{x}_R \Vert + \Delta.
	\end{equation}
	According to \textbf{Definition} \ref{Definition1}, we have $\mathbf{p}_{k}\in \Omega_\Delta(\mathbf{x}_T, \mathbf{x}_R)$. Since $\mathbf{p}_{k}$ was chosen arbitrarily, the result holds for every interaction point on the path. The proof is thus completed.
\end{IEEEproof}

\begin{theorem}
	\label{Theorem1}
	If a building component $B_i$ participates in a propagation path $\mathcal{P}$ from $\mathbf{x}_T$ to $\mathbf{x}_R$ whose path length satisfies $L(\mathcal{P}) \leq \Vert \mathbf{x}_T - \mathbf{x}_R \Vert + \Delta$, then $B_i$ must intersect the propagation-relevance ellipsoid, i.e.,
	\begin{equation}
		\setlength\abovedisplayskip{3pt}
		\setlength\belowdisplayskip{3pt}
		B_i \cap \Omega_\Delta(\mathbf{x}_T, \mathbf{x}_R) \neq \emptyset.
	\end{equation}
\end{theorem}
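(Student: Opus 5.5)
The plan is to reduce Theorem~\ref{Theorem1} to Lemma~\ref{Lemma1} by first fixing what ``participates'' means. I will take it to mean that at least one interaction point of $\mathcal{P}$ lies on the building, i.e., there exists an index $k\in\{1,\dots,K\}$ with $\mathbf{p}_k\in B_i$. Here $B_i$ is viewed as a closed subset of $\mathbb{R}^3$ (its solid or its boundary surface). This is consistent with how reflection and scattering are modeled in the path definition: the only way a building enters the path description $(\mathbf{x}_T,\mathbf{p}_1,\dots,\mathbf{p}_K,\mathbf{x}_R)$ is through its interaction points. In particular, participation forces $K\ge 1$, so the direct LoS case $K=0$ is excluded by the hypothesis.

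With this definition the argument is short.
\begin{enumerate}
\item Pick the index $k$ for which $\mathbf{p}_k\in B_i$.
\item The path satisfies $L(\mathcal{P})\le \|\mathbf{x}_T-\mathbf{x}_R\|+\Delta$, so Lemma~\ref{Lemma1} applies to every interaction point. In particular $\mathbf{p}_k\in\Omega_\Delta(\mathbf{x}_T,\mathbf{x}_R)$.
\item Hence $\mathbf{p}_k\in B_i\cap\Omega_\Delta(\mathbf{x}_T,\mathbf{x}_R)$, so this intersection is nonempty.
\end{enumerate}
No further computation is needed, since the triangle-inequality work is already done in Lemma~\ref{Lemma1}.

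The main point needing care is the modeling step: stating precisely that participation is witnessed by an interaction point lying in $B_i$. One might also read participation as the building merely obstructing a segment of the path, as in LoS blockage. That case is not covered by interaction points, and I will address it with a remark. If $B_i$ contains a point $\mathbf{y}$ on some segment $[\mathbf{p}_j,\mathbf{p}_{j+1}]$, with $\mathbf{p}_0=\mathbf{x}_T$ and $\mathbf{p}_{K+1}=\mathbf{x}_R$, then $\mathbf{y}$ can be inserted as a fictitious interaction point. This leaves $L(\mathcal{P})$ unchanged, because $\|\mathbf{p}_j-\mathbf{y}\|+\|\mathbf{y}-\mathbf{p}_{j+1}\|=\|\mathbf{p}_j-\mathbf{p}_{j+1}\|$. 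Lemma~\ref{Lemma1} then again yields $\mathbf{y}\in\Omega_\Delta$.

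The remark also gives a cleaner, more general statement: every point of the path polyline lies in $\Omega_\Delta$. It shows the theorem holds under either interpretation of participation, including blockage of the direct path.
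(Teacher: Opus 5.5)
Your proposal is correct and follows the paper's proof: the paper also takes participation to mean that some interaction point $\mathbf{p}_k$ lies in $B_i$, and then concludes directly from Lemma~\ref{Lemma1} that $\mathbf{p}_k\in B_i\cap\Omega_\Delta(\mathbf{x}_T,\mathbf{x}_R)$. Your extra remark is also valid: inserting a point $\mathbf{y}$ on a path segment as a fictitious interaction point leaves $L(\mathcal{P})$ unchanged, so the whole polyline (including a blocked direct segment) lies in $\Omega_\Delta$. This slightly strengthens what the paper proves, which matters because EGSR also scores LoS blockers.
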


\begin{IEEEproof}
	If $B_i$ participates in the propagation path, then there exists at least one interaction point $\mathbf{p}_k\in B_i$ on this path. According to \textbf{Lemma} \ref{Lemma1}, this point also satisfies $\mathbf{p}_k\in \Omega_\Delta(\mathbf{x}_T, \mathbf{x}_R)$. Therefore, we have $\mathbf{p}_k\in B_i \cap \Omega_\Delta(\mathbf{x}_T, \mathbf{x}_R)$, which implies $B_i \cap \Omega_\Delta(\mathbf{x}_T, \mathbf{x}_R) \neq \emptyset$. The proof is thus completed.
\end{IEEEproof}
It should be emphasized that \textbf{Theorem} \ref{Theorem1} only provides a necessary but not sufficient condition for the existence of the physical propagation path. A building component inside the ellipsoid does not necessarily generate a valid propagation path. However, the above result shows that any building involved in a bounded-length path must overlap with the ellipsoid region, which motivates using ellipsoid-building overlap as a low-cost cue for propagation relevance.

\begin{figure}[t]
	\centering
	\includegraphics[width=3.5in]{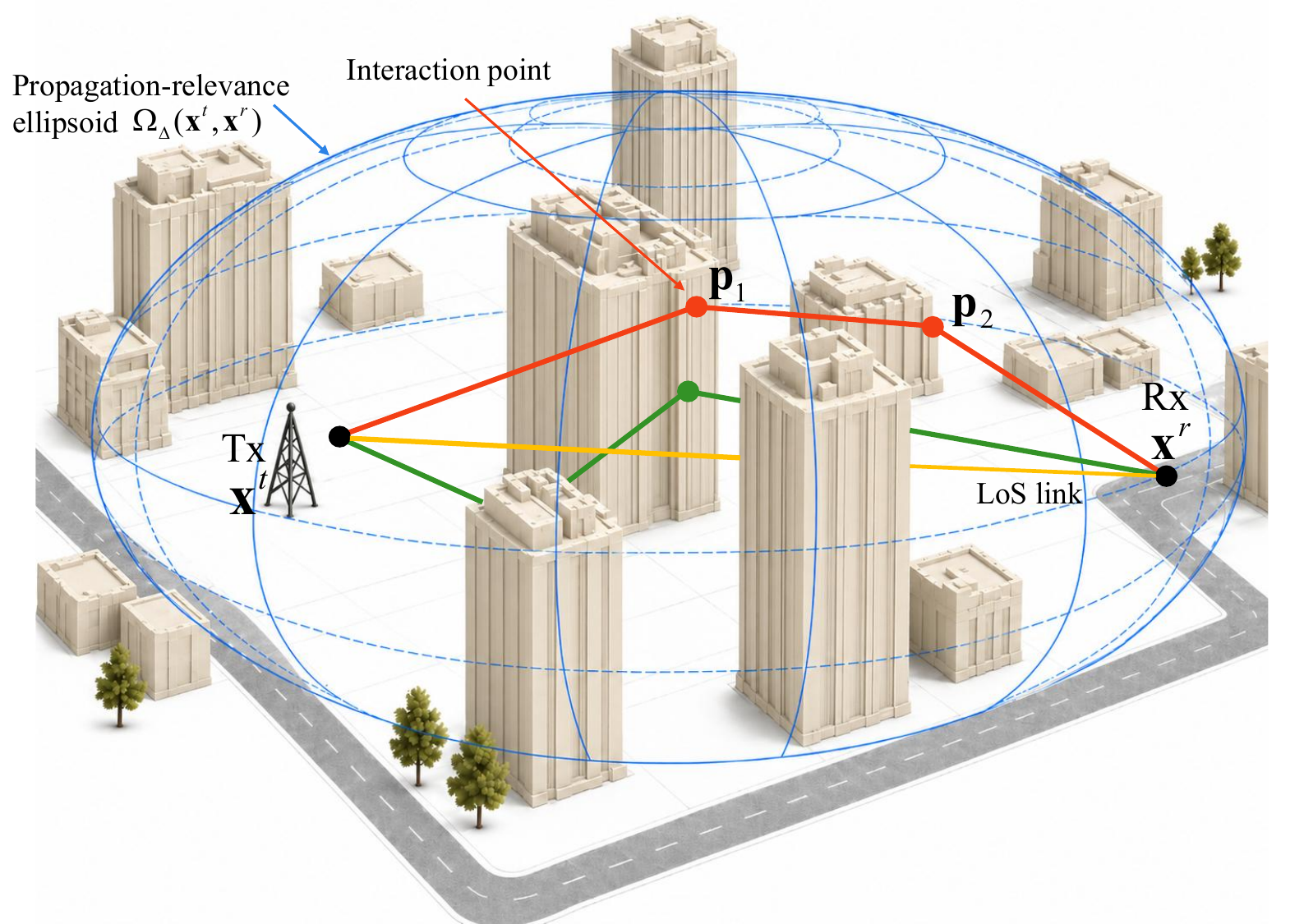}
	\caption{Illustration of the propagation-relevance ellipsoid for a bounded-length Tx-Rx propagation path.}
	\label{fig:ellipsoid}
\end{figure}

\subsection{Ellipsoid-Guided Selective Refinement (EGSR) Algorithm}
During practical WDT construction, the reference high-fidelity scene is usually unavailable. This motivates the need for a refinement selection algorithm that can operate solely on a low-fidelity WDT, while still capturing the propagation relevance of different buildings. To this end, we propose a low-cost EGSR algorithm, which leverages the propagation-relevance ellipsoid introduced in the previous subsection to rank buildings according to their potential impact on wireless propagation without requiring exhaustive ray tracing or ground-truth ablation. The key idea is to evaluate whether a building potentially participates in the propagation-relevant region between the Tx and receivers. Buildings that either block the dominant LoS link or overlap more strongly with the Tx-Rx propagation ellipsoids are assigned higher refinement priority.

\subsubsection{Polar-domain Rx Proxy Set}\label{SectionIII.B.1}
Before computing the EGSR score, we first construct a representative set of Rx locations from the low-fidelity WDT. Instead of using a uniformly spaced Cartesian grid, EGSR adopts a Tx-centered radial stratified Rx proxy to capture the spatial relevance of buildings in the scene.

\begin{figure}[t]
	\centering
	\includegraphics[width=3.5in]{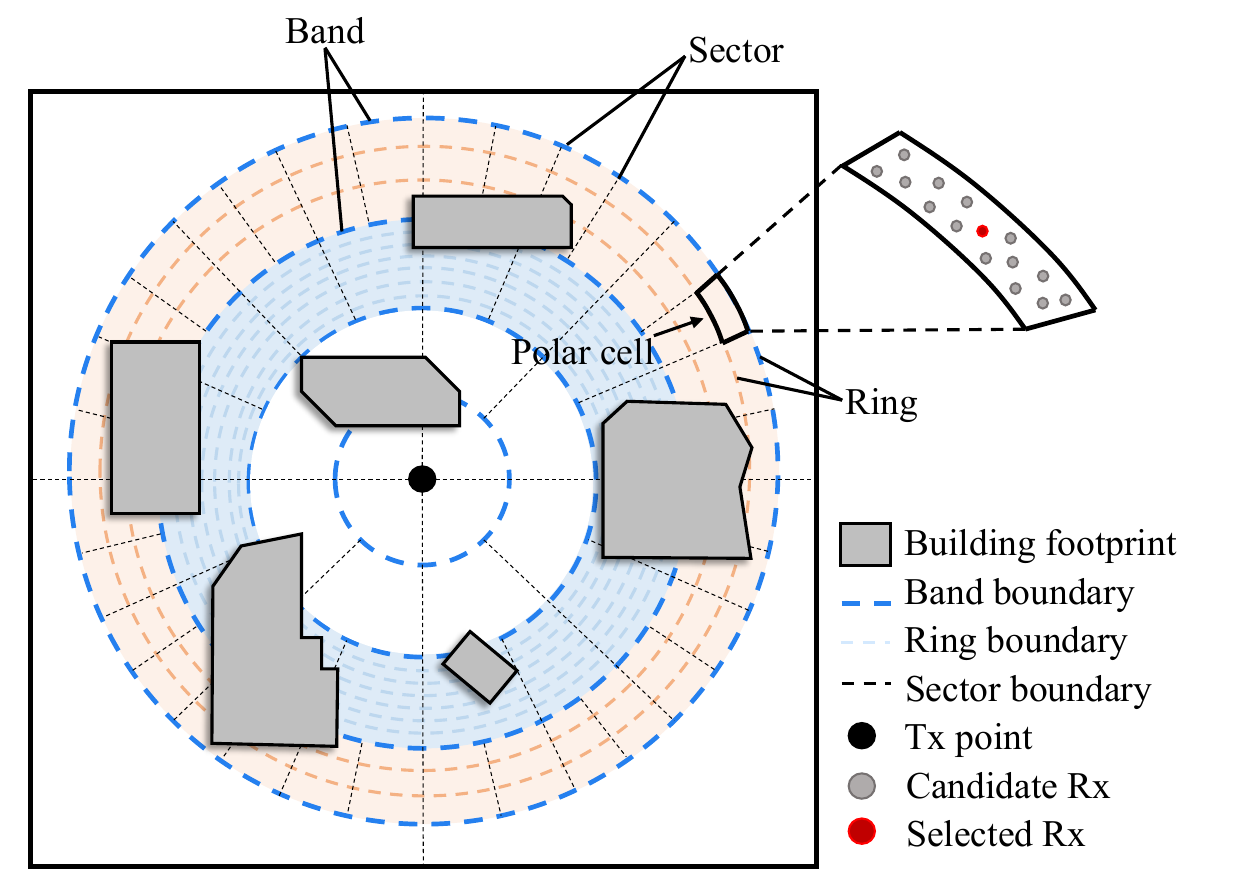}
	\caption{Illustration of the Tx-centered radial stratified receiver proxy construction.}
	\label{fig:radial_rx_candidate}
\end{figure}

Specifically, let $\mathcal{V}=\{V_j\}_{j=1}^{J}$ denote the set of triangular cells on the terrain measurement surface\cite{Aoudia2025sionna}. For each terrain cell $V_j$, a candidate Rx point is generated by lifting the cell center by a certain height $h_r$, i.e.,
\begin{equation}
	\setlength\abovedisplayskip{3pt}
	\setlength\belowdisplayskip{3pt}
	\mathbf{c}_j  = \mathbf{t}_j + h_r \mathbf{e}_z, j=1,2,\cdots,J,
\end{equation} 
where $\mathbf{t}_j$ is the center of the $j$-th terrain cell, and $\mathbf{e}_z$ is the unit vector along the vertical direction\footnote{Candidate points whose horizontal coordinates fall inside building footprints are removed.}. For each candidate $\mathbf{c}_j$, its Tx-center polar coordinates are computed as 
\begin{equation}
	\left\{
	\begin{array}{lr}
		r_j = \sqrt{(c_j^x - x_T^{(x)})^2 + (c_j^y - x_T^{(y)})^2}, &  \\
		\phi_j = \operatorname{atan2}(c_j^y - x_T^{(y)}, c_j^x - x_T^{(x)}), &  
	\end{array}
	\right.
\end{equation}
where $x_T^{(x)}$ and $x_T^{(y)}$ represent the coordinates along the $x$- and $y$-axes of Tx located at $\mathbf{x}_T$. The candidates are first stratified into $L$ radial \emph{bands} centered at the transmitter. Define $0=R_1 < R_2 < \cdots < R_L$, thus, the radial bands can be defined as
\begin{equation}
	\left\{
	\begin{array}{lr}
		\mathcal{R}_l=\{\mathbf{c}_j \vert R_l \leq r_j \leq R_{l+1}\}, l=1,\cdots,L-1, &  \\
		\mathcal{R}_L=\{\mathbf{c}_j \vert r_j \geq R_L\}. &  
	\end{array}
	\right.
\end{equation}
Each radial band is further assigned a target spacing $\delta_l$. In general, smaller spacing are used for near-Tx bands and larger spacing are used for far-Tx bands, so that the selected Rx points are denser near the Tx and sparser farther away. The $l$-th band is further divided into \emph{rings} according to $\delta_l$, the candidates in each ring are then grouped according to their azimuth directions into angular \emph{sectors}. As illustrated in Fig. \ref{fig:radial_rx_candidate}, after the stratified partitioning, each candidate Rx can be uniquely assigned to a \emph{polar cell} indexed by $(l, k, m)$, where $l$ denotes the radial band index, $k$ is the ring index within the band, and $m$ represents the angular sector index within that ring.

For each non-empty polar cell, only one representative Rx point is retained. Specifically, among all candidate receiver points assigned to the same polar cell, we select the candidate closest to the corresponding cell center in the Tx-centered polar domain. Collecting the representatives from all non-empty polar cells gives the Rx proxy set with the size of $N$.

\subsubsection{Score Calculation}
The first step of EGSR is to construct a lightweight geometric proxy for each building. This is motivated by the fact that meshes in a low-fidelity WDT may contain reconstruction artifacts, such as holes, fragmented facets, and non-watertight surfaces. Directly performing mesh-level geometric analysis on such imperfect meshes can be unstable and computationally expensive. To this end, each building is converted into a lightweight proxy that preserves its dominant spatial occupancy while suppressing local reconstruction noise. Specifically, the horizontal footprint is approximated by a convex hull $\mathcal{F}_i$ of the building mesh vertices, while the vertical extent is represented by a height interval $[z_i^{-},z_i^{+}]$. The resulting building proxy is defined as the Cartesian product
\begin{align}
	\label{eq:building_proxy}
	\widetilde{B}_i &= \mathcal{F}_i\times[z_i^{-},z_i^{+}] \nonumber\\
	&=\!\{(\mathbf{u}, z) \in \mathbb{R}^{2\times 1} \times \mathbb{R} \vert \mathbf{u}\in\mathcal{F}_i, z_i^{-}\leq z \leq z_i^{+} \},
\end{align}
where $\mathbf{u}=[x,y]^{{\rm T}}$ denotes the horizontal coordinate, and $z_i^{-}$ and $z_i^{+}$ are the lower and upper height boundaries of the proxy, respectively.

\begin{figure}[t]
	\centering
	\includegraphics[width=3.3in]{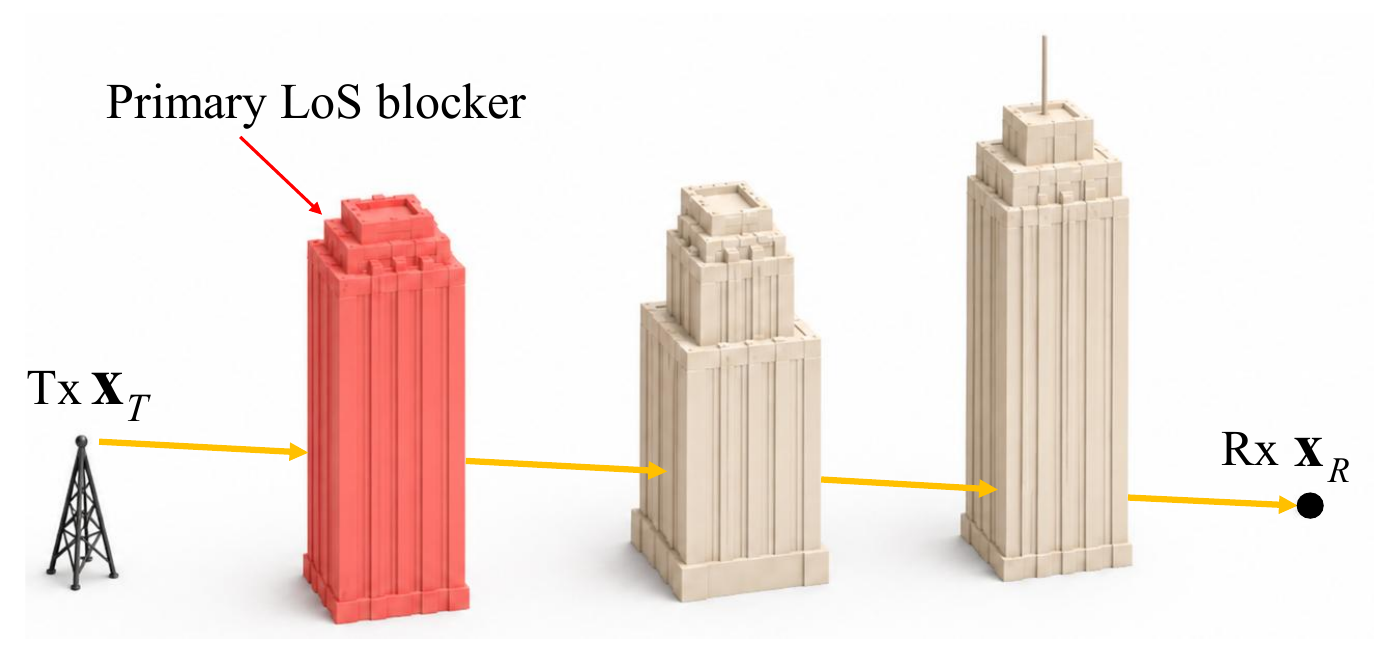}
	\caption{Illustration of primary LoS blocker identification along the direct Tx-Rx segment.}
	\label{fig:primary_los_blocker}
\end{figure}

For the $n$-th Tx-Rx pair $(\mathbf{x}_T, \mathbf{x}_R)$, EGSR constructs a propagation-relevant ellipsoid $\Omega_\Delta(\mathbf{x}_T, \mathbf{x}_R)$ with a fixed excess path length $\Delta$ based on \textbf{Definition} \ref{Definition1}. Therefore, the overlap between $\widetilde{B}_i$ and $\Omega_\Delta(\mathbf{x}_T, \mathbf{x}_R)$ provides a geometry-aware cue for the potential participation of $B_i$ in the propagation towards Rx at $\mathbf{x}_R$. To account for LoS blockage, the algorithm first detects whether each building proxy intersects the direct segment between $\mathbf{x}_T$ and $\mathbf{x}_R$. If multiple building proxies block the same Tx-Rx segment, the first intersected building proxy along the Tx-to-Rx direction is defined as the primary LoS blocker, as shown in Fig. \ref{fig:primary_los_blocker}. We define the LoS blockage weight as
\begin{equation}
	\setlength\abovedisplayskip{3pt}
	\setlength\belowdisplayskip{3pt}
	\label{eq:los_boosting_factor}
	\chi_{i,n}^{\mathrm{LoS}}
	\!=\!\!
	\begin{cases}
		\eta_{\mathrm{LoS}}, \!& \text{if $B_i$ is the primary LoS blocker},\\
		1, \! & \text{otherwise},
	\end{cases}
\end{equation}
where $\eta_{\mathrm{LoS}} > 1$ is the LoS boosting factor.

\begin{figure}[t]
	\centering
	\includegraphics[width=3.5in]{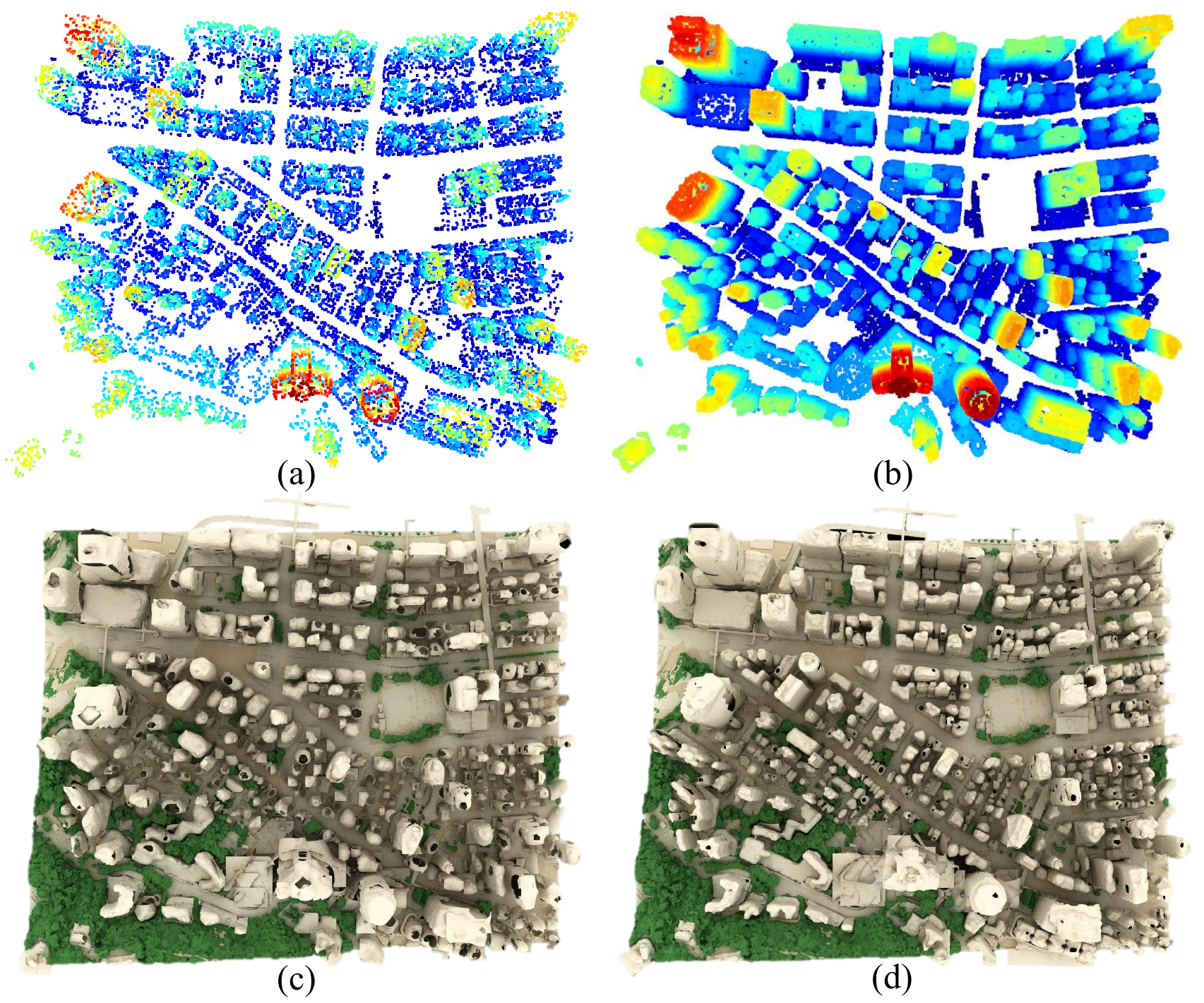}
	\caption{Illustration of low-fidelity DT construction under different point cloud densities. (a) and (b) show sparse point clouds with densities of $0.01$ and $0.05$ points/m$^2$, respectively. (c) and (d) show the corresponding reconstructed meshes obtained from (a) and (b) via Poisson surface reconstruction, respectively.}
	\label{fig:point_clouds}
\end{figure}

Next, the NLoS-related geometric participation of $B_i$ for the $n$-th Tx-Rx pair is measured by the normalized ellipsoid-overlap volume as
\begin{equation}
	\setlength\abovedisplayskip{3pt}
	\setlength\belowdisplayskip{3pt}
	\label{eq:norm_overlap_vol}
	\rho_{i,n}^{\mathrm{NLoS}} = \frac{{\mathrm{Vol}}(\widetilde{B}_i \cap \Omega_\Delta(\mathbf{x}_T, \mathbf{x}_R))}{{\mathrm{Vol}}(\Omega_\Delta(\mathbf{x}_T, \mathbf{x}_R))},
\end{equation}
where ${\mathrm{Vol}}(\widetilde{B}_i \cap \Omega_\Delta(\mathbf{x}_T, \mathbf{x}_R))$ in (\ref{eq:norm_overlap_vol}) can be written as an integral over the horizontal footprint $\mathcal{F}_i$:
\begin{equation}
	\setlength\abovedisplayskip{3pt}
	\setlength\belowdisplayskip{3pt}
	\label{eq:overlap_vol}
	{\mathrm{Vol}}(\widetilde{B}_i \cap \Omega_\Delta(\mathbf{x}_T, \mathbf{x}_R)) = \int_{\mathcal{F}_i}\tau_{i,n}(\mathbf{u})d\mathbf{u},
\end{equation}
as defined above, $\mathbf{u}\in \mathbb{R}^2$ denotes the horizontal coordinate in the footprint, and $\tau_{i,n}(\mathbf{u})$ is the vertical overlap thickness between the building proxy and the ellipsoid at horizontal location $\mathbf{u}$. In practice, this integral is approximated by uniformly sampling $Q_i$ points from the footprint $\mathcal{F}_i$ as
\begin{equation}
	\setlength\abovedisplayskip{3pt}
	\setlength\belowdisplayskip{3pt}
	\label{eq:mc_overlap_vol}
	{\mathrm{Vol}}(\widetilde{B}_i \cap \Omega_\Delta(\mathbf{x}_T, \mathbf{x}_R)) \approx \frac{|\mathcal{F}_i|}{Q_i}\sum_{q=1}^{Q_i} \tau_{i,n}(\mathbf{u}_{i,q}),
\end{equation}
where $\mathbf{u}_{i,q}\in\mathcal{F}_i$ is the $q$-th sampling point in the footprint and $|\mathcal{F}_i|$ denotes the footprint area, which can be computed using the Surveyor's formula\cite{Braden1986surveyor}.

Therefore, we can obtain the score of this building component $B_i$ for the $n$-th Tx-Rx pair as
\begin{equation}
	\setlength\abovedisplayskip{3pt}
	\setlength\belowdisplayskip{3pt}
	\label{eq:one_score}
	s_{i,n} = \chi_{i,n}^{\mathrm{LoS}}\rho_{i,n}^{\mathrm{NLoS}}.
\end{equation}
Then, the final EGSR score of $B_i$ can be obtained by averaging $s_{i,n}$ over all considered receiver locations as follows:
\begin{equation}
	\setlength\abovedisplayskip{3pt}
	\setlength\belowdisplayskip{3pt}
	\label{eq:building_score}
	s_i = \frac{1}{N}\sum_{n=1}^N s_{i,n}.
\end{equation}

A larger $s_i$ indicates that the building either blocks the LoS links, overlaps with more propagation-relevance ellipsoids, or occupies a larger normalized portion of the relevant propagation region. Therefore, the building is expected to have a stronger impact on the wireless-fidelity task and should receive higher refinement priority. The procedure of EGSR is summarized in \textbf{Algorithm} \ref{alg:egsr}.

\begin{algorithm}[t]
	\caption{Ellipsoid-Guided Selective Refinement (EGSR)}
	\label{alg:egsr}
	\begin{algorithmic}[1]
		\REQUIRE Low-fidelity WDT $\mathcal{D}^{(0)}$; building set $\mathcal{B}=\{B_i\}_{i=1}^{N_b}$; Tx position $\mathbf{x}_T$; terrain measurement surface $\mathcal{V}$; excess path length $\Delta$; LoS boost factor $\eta_{\mathrm{LoS}}$; refinement budget $W$, and other hyper-parameters in Table \ref{tab:egsr_parameters}
		\ENSURE Selected building set $\mathcal{B}_{W}$.
		
		\STATE Construct the radial stratified Rx proxy based on Section \ref{SectionIII.B.1}
		
		\FOR{each building $B_i\in\mathcal{B}$}
		\STATE Construct the building proxy $\widetilde{B}_i$ according to (\ref{eq:building_proxy})
		\STATE Generate footprint samples $\{\mathbf{u}_{i,q}\}_{q=1}^{Q_i}$ within $\mathcal{F}_i$
		\STATE Initialize $s_i\leftarrow 0$
		\ENDFOR
		
		\FOR{each Rx $\mathbf{x}_R \in\mathcal{X}_R$}
		\STATE Construct $\Omega_{\Delta}(\mathbf{x}_T,\mathbf{x}_R)$ according to Definition~1
		\STATE Identify the primary LoS blocker $i_n^{\mathrm{LoS}}$ along the Tx-to-Rx direction, if any
		\FOR{each building $B_i\in\mathcal{B}$}
		\STATE Compute the LoS weight $\chi_{i,n}^{\mathrm{LoS}}$ according to (\ref{eq:los_boosting_factor})
		\STATE Estimate the normalized NLoS overlap score $\rho_{i,n}^{\mathrm{NLoS}}$ according to (\ref{eq:norm_overlap_vol})--(\ref{eq:mc_overlap_vol})
		\STATE Compute the pairwise relevance score $s_{i,n}$ according to (\ref{eq:one_score})
		\STATE Update $s_i\leftarrow s_i+s_{i,n}$
		\ENDFOR
		\ENDFOR
		
		\FOR{each building $B_i\in\mathcal{B}$}
		\STATE Average the score as $s_i\leftarrow s_i/N$ according to (\ref{eq:building_score})
		\ENDFOR
		
		\STATE Select the top-$W$ buildings based on $s_i$
		\RETURN $\mathcal{B}_{W}$
	\end{algorithmic}
\end{algorithm}

\begin{remark}
	The general refinement formulation in (\textbf{P1}) does not require a one-to-one correspondence between physical objects and digital objects. However, in the building-level instantiation considered in this work, we assume that the low-fidelity WDT provides a set of identifiable building components that can be individually refined. More practical situations including missing objects, over-segmentation, or reconstruction artifacts will be investigated in our future work.
\end{remark}

\section{Numerical Results}\label{SectionV}
In this section, we evaluate the proposed EGSR algorithm in realistic dense urban WDT scenarios. We first describe the simulation setup and benchmark refinement strategies. Then, we compare different methods in terms of radio-map and CSI fidelity. Finally, an ablation study is conducted to investigate the impact of hyper-parameters on the performance of EGSR.

\subsection{Simulation Setup}
We evaluate the proposed algorithm in realistic urban scenarios constructed from the 3D spatial data of Hong Kong. Specifically, the \emph{target scenarios} are generated from the high-fidelity 3D city models provided by the Lands Department of the Hong Kong SAR Government through the Open3Dhk platform. These target scenarios, as illustrated in Fig.~\ref{fig:scenarios}, contain dense urban layouts with irregular building footprints and diverse building heights. In our evaluation, they serve as the reference high-fidelity environments for computing the ground-truth radio maps and channel responses.

To emulate practical WDT construction under limited sensing and reconstruction resources, we further construct a corresponding \emph{DT scenario} for each target scenario. Starting from the high-fidelity city geometry, we first generate a sparse point cloud to mimic the geometry obtained through limited point-cloud acquisition as shown in Fig. \ref{fig:point_clouds}(a) and (b). The sparse point cloud is then converted into a triangular mesh using Poisson surface reconstruction\cite{Kazhdan2013screened}, as shown in Fig. \ref{fig:point_clouds}(c) and (d). Due to the reduced point density and reconstruction errors, the resulting DT scenarios contain typical low-fidelity artifacts, such as smoothed building boundaries, missing facade details, holes, and imperfect surface closure. These reconstructed DT scenarios are used as the input to EGSR for estimating building-level refinement priorities, while the target scenarios are only used for performance evaluation. The ray-tracing configuration is consistent with Table \ref{tab:sionna_config}, and the main hyper-parameters for EGSR are summarized in Table \ref{tab:egsr_parameters}.

\begin{table}[t]
	\centering
	\caption{Main Parameters for EGSR}
	\label{tab:egsr_parameters}
	\begin{tabular}{l c}
		\hline
		Parameter & Value \\
		\hline
		Initial point-cloud density & $0.01$ points/m$^2$ \\
		Refined point-cloud density & $10$ points/m$^2$ \\
		Main excess path length $\Delta$ & $50$ m \\
		Radial band boundaries & $\{50,100,200,350\}$ m \\
		Spacing $\delta_l$ per radial band & $\{0.55,0.75,1.5,2.7,4.0\}$ m \\
		Minimum sectors per ring & $8$ \\
		LoS boost factor $\eta_{\rm LoS}$ & 2.0 \\
		Rx height offset $h_r$ & $1.5$ m \\
		Footprint sampling interval & 2.0 m \\
		Coverage threshold & $-80$ dB \\
		\hline
	\end{tabular}
\end{table}

\subsection{Baselines}
We compare the proposed EGSR algorithm with three representative refinement baselines. First, we consider \emph{uniform refinement}, where the point cloud density of the entire DT scenario is uniformly increased to 10 points/${\rm m}^2$. This baseline represents a straightforward full-scene refinement strategy, which improves the geometric fidelity globally without considering the task relevance of different buildings. Second, we consider \emph{volume-based refinement}, where buildings are ranked according to their geometric volumes estimated from the low-fidelity DT scenario, and the top-$W$ largest buildings are selected for refinement. This baseline 
reflects a simple geometry-only rule, under the intuition that larger buildings may have stronger effects on wireless propagation. Third, we adopt a \emph{random refinement} baseline, where $W$ buildings are randomly selected for refinement. A visualization of the buildings selected by different refinement strategies is provided in Fig. \ref{fig:refinement_illustration}.

\begin{figure}[h]
	\centering
	\includegraphics[width=3.5in]{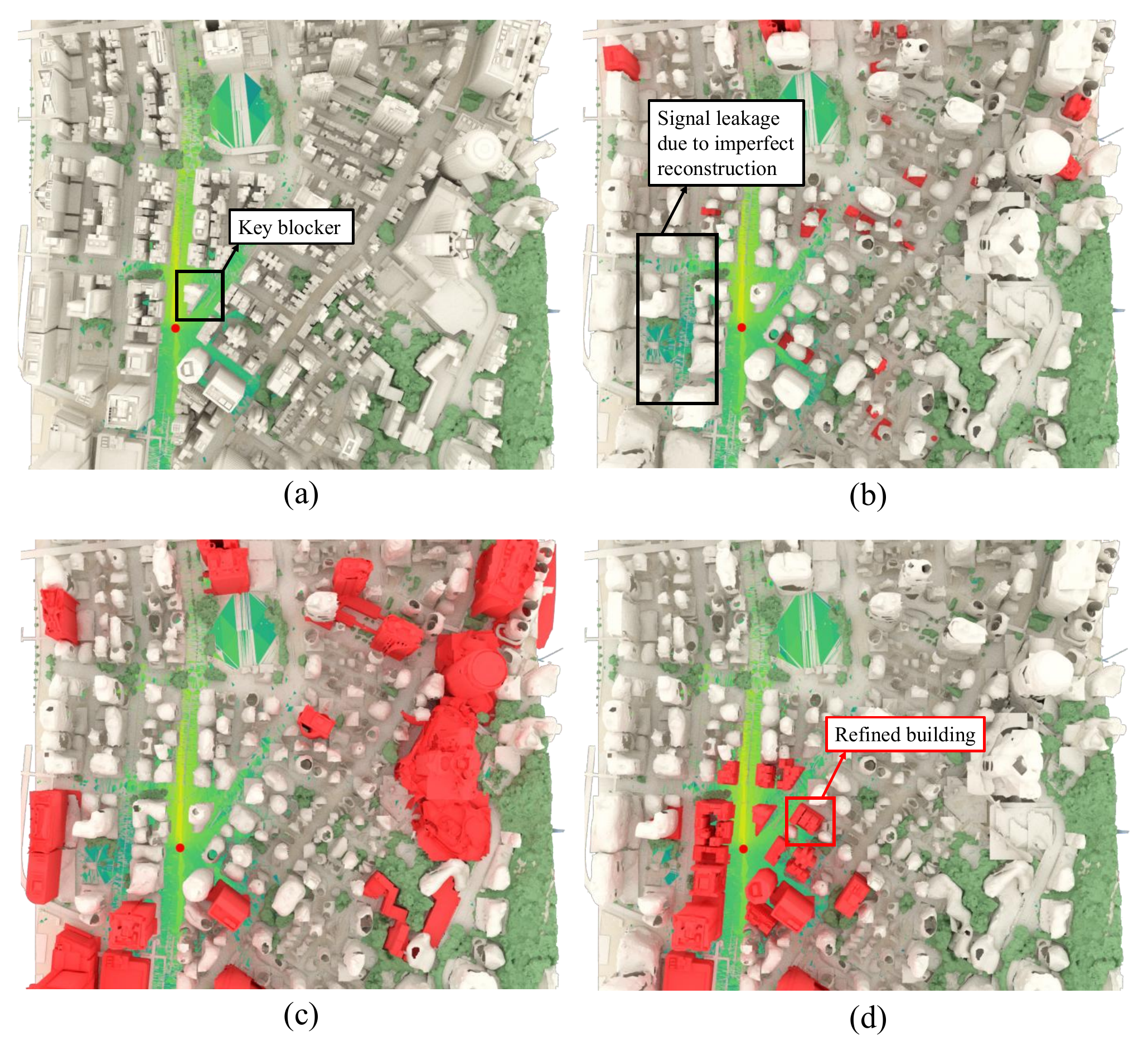}
	\caption{Visualization of selected buildings under different refinement strategies. (a) The reference high-fidelity scenario, (b) random refinement, (c) volume-based refinement, and (d) the proposed EGSR.}
	\label{fig:refinement_illustration}
\end{figure}

\subsection{Wireless Fidelity for Radio Map Computation}
Let $P^{\star}(\mathbf{x}_n^r)$ denote the ground-truth path gain at receiver location 
$\mathbf{x}_n^r$ obtained from the target scenario, and let 
$\widehat{P}(\mathbf{x}_n^r)$ denote the path gain obtained from the refined DT scenario. 
We first evaluate the radio-map reconstruction error over the entire receiver set 
$\mathcal{X}$ by the root mean square error (RMSE), given by
\begin{equation}
	\setlength\abovedisplayskip{3pt}
	\setlength\belowdisplayskip{3pt}
	{\rm RMSE}_{\rm all}
	=
	\sqrt{
		\frac{1}{|\mathcal{X}_R|}
		\sum_{\mathbf{x}_R\in\mathcal{X}_R}
		\left(
		\widehat{P}(\mathbf{x}_R)-P^{\star}(\mathbf{x}_R)
		\right)^2
	}.
\end{equation}

In addition to the full-radio-map error, we further evaluate the RMSE over the effective 
coverage region. Specifically, we define the set of covered receiver locations according 
to the ground-truth path gain as
\begin{equation}
	\setlength\abovedisplayskip{3pt}
	\setlength\belowdisplayskip{3pt}
	\mathcal{X}_{\rm cov}
	=
	\left\{
	\mathbf{x}_R\in\mathcal{X}_R
	\,\middle|\,
	P^{\star}(\mathbf{x}_R) > P_{\rm th}
	\right\},
\end{equation}
where $P_{\rm th}=-80~{\rm dB}$ is the coverage threshold in the simulation. The coverage-region RMSE is then 
defined as
\begin{equation}
	{\rm RMSE}_{\rm cov}
	=
	\sqrt{
		\frac{1}{|\mathcal{X}_{\rm cov}|}
		\sum_{\mathbf{x}_R\in\mathcal{X}_{\rm cov}}
		\left(
		\widehat{P}(\mathbf{x}_R)-P^{\star}(\mathbf{x}_R)
		\right)^2
	}.
\end{equation}

\begin{figure}[t]
	\centering
	\includegraphics[width=3.5in]{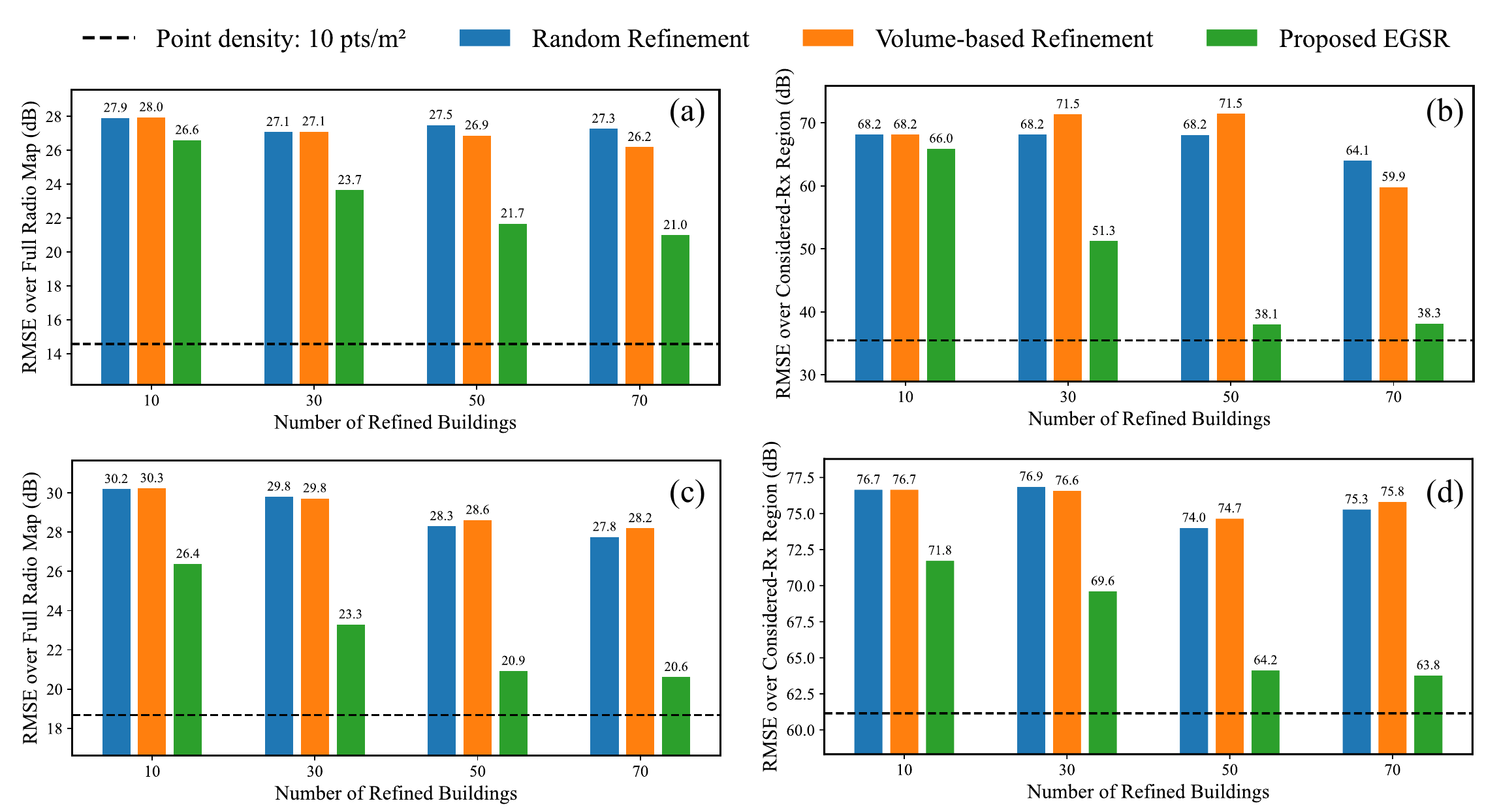}
	\caption{Radio-map reconstruction RMSE under different refinement strategies in Scenario~\#1. (a) $\mathrm{RMSE}_{\rm all}$ under Tx 1, 
		(b) $\mathrm{RMSE}_{\rm cov}$ under Tx 1, 
		(c) $\mathrm{RMSE}_{\rm all}$ under Tx 2, and 
		(d) $\mathrm{RMSE}_{\rm cov}$ under Tx 2.}
	\label{fig:rm_11sw14b}
\end{figure}

\begin{figure}[t]
	\centering
	\includegraphics[width=3.5in]{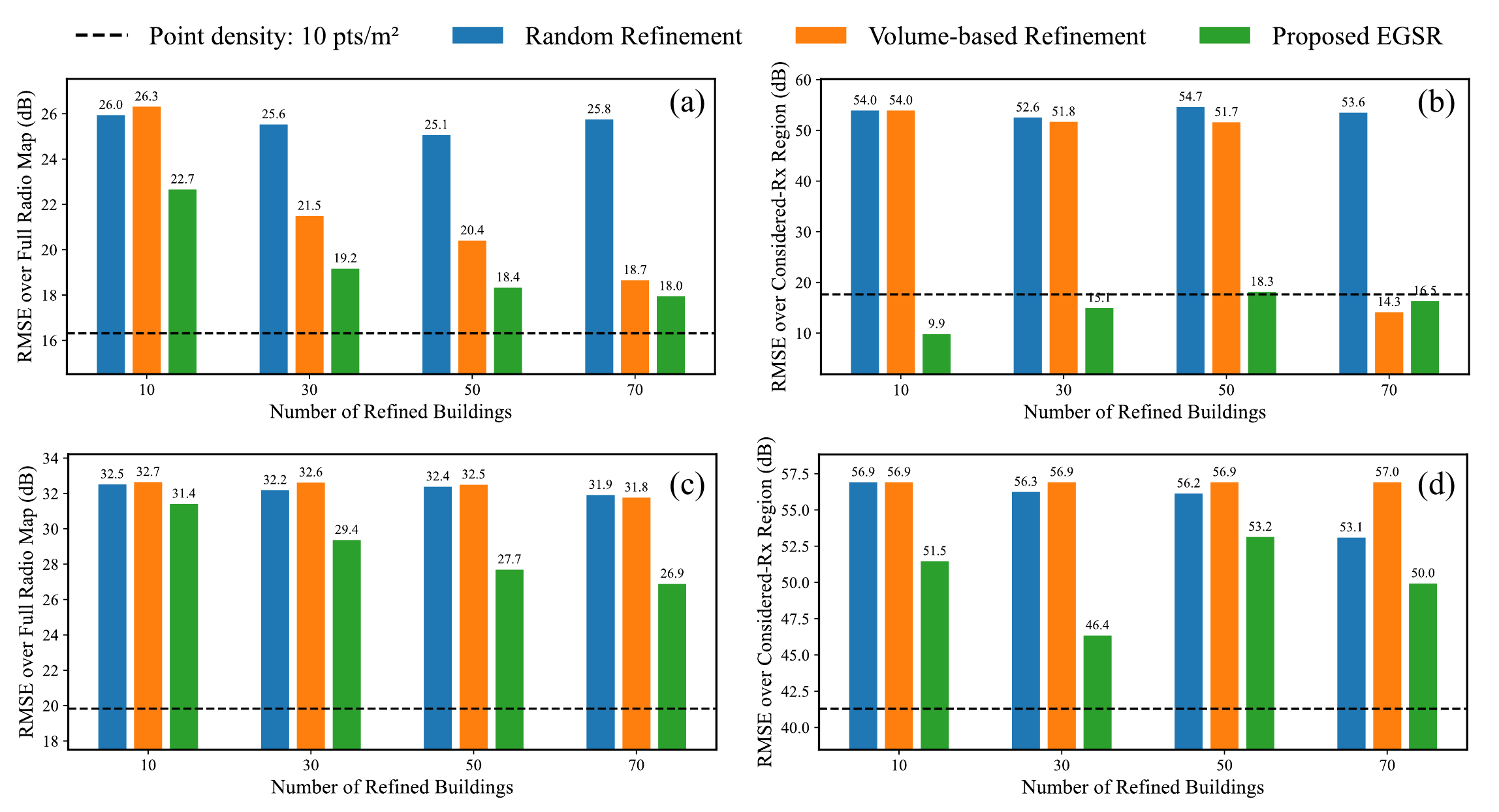}
	\caption{Radio-map reconstruction RMSE under different refinement strategies in Scenario~\#2. (a) $\mathrm{RMSE}_{\rm all}$ under Tx 1, 
		(b) $\mathrm{RMSE}_{\rm cov}$ under Tx 1, 
		(c) $\mathrm{RMSE}_{\rm all}$ under Tx 2, and 
		(d) $\mathrm{RMSE}_{\rm cov}$ under Tx 2.}
	\label{fig:rm_11sw8d}
\end{figure}

\begin{figure}[t]
	\centering
	\includegraphics[width=3.5in]{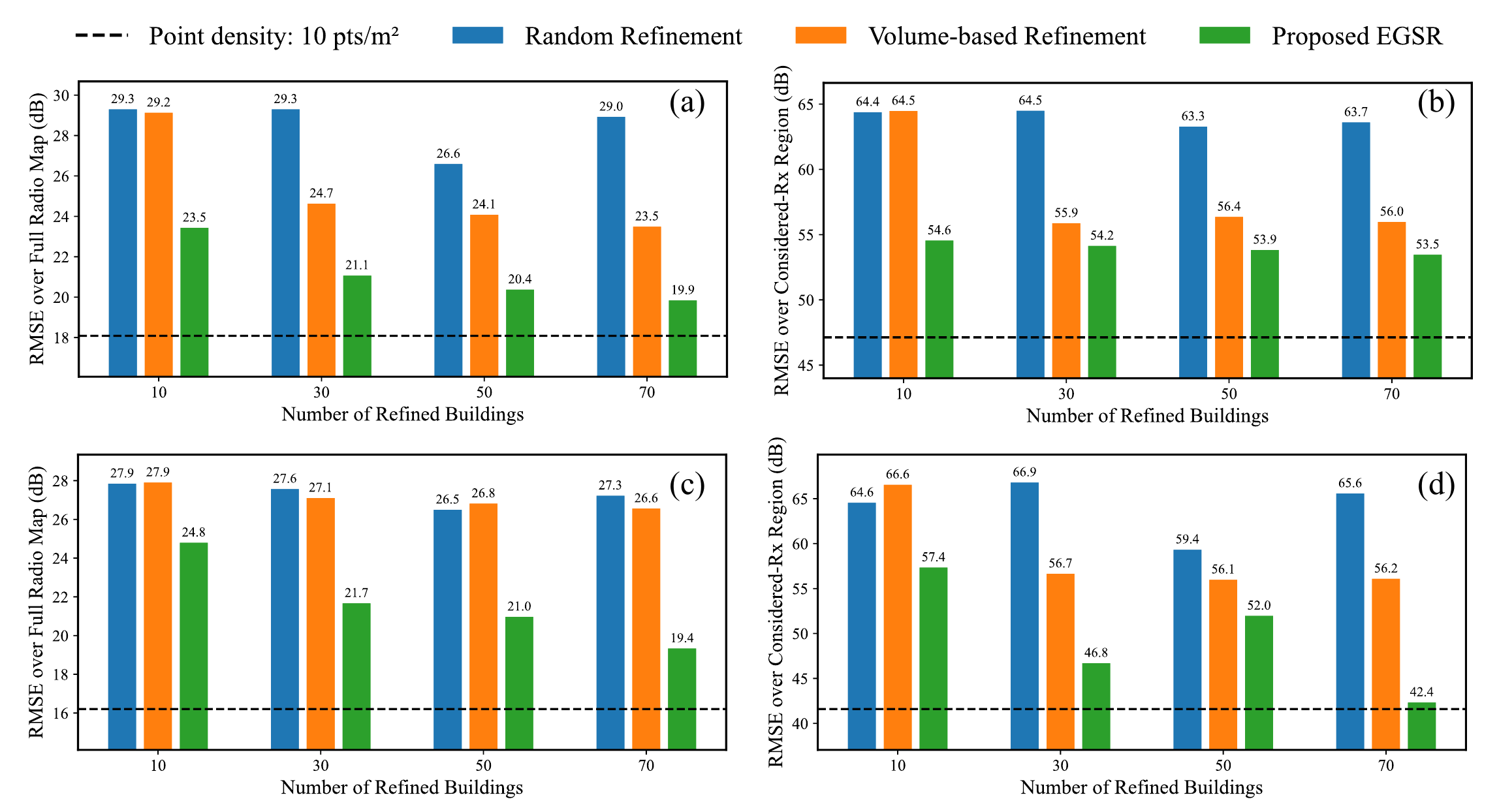}
	\caption{Radio-map reconstruction RMSE under different refinement strategies in Scenario~\#3. (a) $\mathrm{RMSE}_{\rm all}$ under Tx 1, 
		(b) $\mathrm{RMSE}_{\rm cov}$ under Tx 1, 
		(c) $\mathrm{RMSE}_{\rm all}$ under Tx 2, and 
		(d) $\mathrm{RMSE}_{\rm cov}$ under Tx 2.}
	\label{fig:rm_11sw7b}
\end{figure}

Figs.~\ref{fig:rm_11sw14b}--\ref{fig:rm_11sw7b} compare the radio map reconstruction accuracy of different refinement strategies in the three considered urban scenarios. Several consistent trends can be observed. First, the proposed EGSR achieves the lowest RMSE among the selective refinement schemes in most cases, for both the full radio map and the coverage-relevant receiver region. As the refinement budget $W$ increases, EGSR generally leads to a clear reduction in RMSE, indicating that the ellipsoid-guided score can effectively identify buildings whose geometric fidelity has a strong impact on radio-map accuracy. In contrast, random and volume-based refinement exhibit much weaker and less stable improvements. In particular, refining buildings with large geometric volumes does not necessarily reduce the radio-map error, which confirms that building size alone is not a reliable indicator of wireless propagation relevance. This is because a large building may be far away from the dominant Tx-Rx propagation region, while a smaller building located near the LoS segment or within the relevant ellipsoid may induce much stronger changes in path gain.

Second, the advantage of EGSR is more pronounced over the covered receiver region $\mathcal{X}_{\rm cov}$ than over the entire receiver set $\mathcal{X}$. This is because $\mathcal{X}_{\rm cov}$ focuses on receiver locations with ground-truth path gain above the coverage threshold, which are more relevant to practical wireless service. In these regions, the received power is more sensitive to the fidelity of propagation-critical buildings, such as dominant LoS blockers and buildings involved in strong reflection paths. By explicitly accounting for Tx-Rx geometry through LoS blockage and ellipsoid-building overlap, EGSR can allocate the limited refinement budget to buildings that are more likely to affect service-relevant propagation. Therefore, compared with random and volume-based refinement, EGSR provides a more task-oriented and deployment-aware recovery of radio-map fidelity.

Compared with the uniform refinement baseline, the key advantage of EGSR lies in achieving comparable radio-map fidelity with substantially lower refinement cost. Taking Fig. \ref{fig:rm_11sw14b}(a) as an example, when $K=50$, EGSR achieves ${\rm RMSE}_{\rm cov}$ values of $38.1$ dB and $64.2$ dB under the two Tx deployments, respectively, which are only about $2$--$3$ dB higher than the corresponding full-scene uniform refinement results. However, EGSR only refines $50$ buildings out of $735$ buildings in this scenario, i.e., less than $7\%$ of the building components, leading to over $93\%$ reduction in the number of refined buildings compared with full-scene refinement. Moreover, since uniform refinement increases the point-cloud density of the entire DT scenario to $10~{\rm points/m^2}$, while EGSR only densifies the selected propagation-relevant buildings, the total point-cloud budget is also substantially reduced. Such reduction directly lowers the sensing, mesh reconstruction, storage, and ray-tracing costs, while preserving most of the radio-map fidelity in the service-relevant receiver region.

\begin{figure}[t]
	\centering
	\includegraphics[width=3.5in]{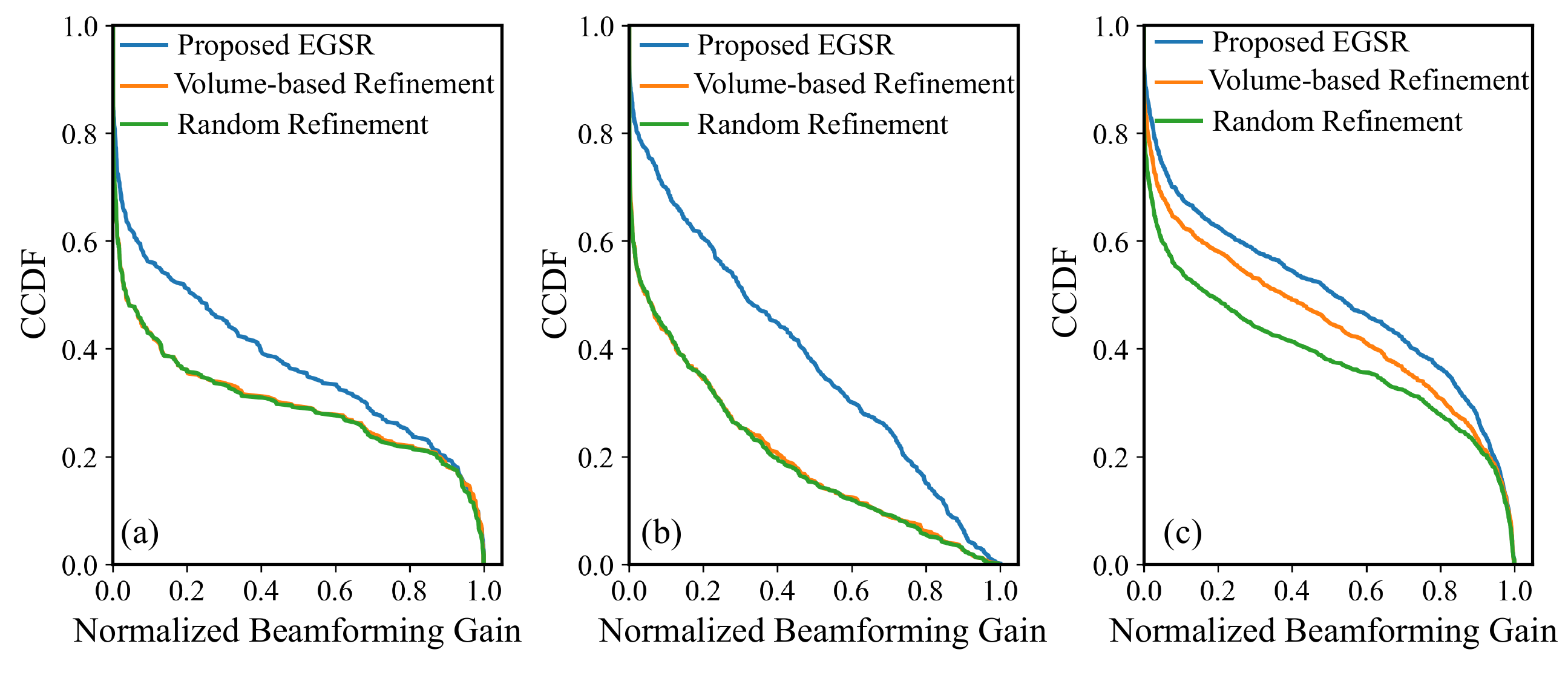}
	\caption{CCDF of the normalized beamforming gain under different refinement strategies in (a) scenario \#1, (b) scenario \#2 and (c) scenario \#3.}
	\label{fig:bf_ccdf}
\end{figure}

\subsection{Wireless Fidelity for Beamforming}
In addition to radio map fidelity, we further evaluate whether the beamforming vector derived from the refined DT scenario remains effective when applied to the target scenario. This is measured by the normalized beamforming gain, which quantifies the alignment between the DT-inferred beamformer and the ground-truth channel. Specifically, for a receiver located at $\mathbf{x}_n^r$, let  $\mathbf{h}^{\star}(\mathbf{x}_n^r)$ denote the ground-truth channel vector obtained from the target scenario, and let $\widehat{\mathbf{h}}(\mathbf{x}_n^r)$ denote the channel vector obtained from the refined DT scenario. The MRT beamformer inferred from the refined DT scenario is given by
\begin{equation}
	\widehat{\mathbf{w}}(\mathbf{x}_n^r)
	=
	\frac{\widehat{\mathbf{h}}(\mathbf{x}_n^r)}
	{\left\|\widehat{\mathbf{h}}(\mathbf{x}_n^r)\right\|}.
\end{equation}
When this DT-inferred beamformer is applied to the ground-truth channel, the 
normalized beamforming gain is defined as
\begin{equation}
	G^{\rm BF}(\mathbf{x}_n^r)=
	\frac{
		\left|
		\widehat{\mathbf{w}}^{{\rm H}}(\mathbf{x}_n^r)
		\mathbf{h}^{\star}(\mathbf{x}_n^r)
		\right|^2
	}
	{
		\left\|\mathbf{h}^{\star}(\mathbf{x}_n^r)\right\|^2
	}.
\end{equation}
To characterize the distribution of beamforming fidelity, we adopt the 
complementary cumulative distribution function (CCDF) of the normalized 
beamforming gain, defined as
\begin{equation}
	\bar{F}^{\rm BF}(\gamma)
	=
	\frac{1}{|\mathcal{X}_{\rm BF}|}
	\sum_{\mathbf{x}_n^r\in\mathcal{X}_{\rm BF}}
	\mathbb{I}
	\left(
	G^{\rm BF}(\mathbf{x}_n^r)\geq \gamma
	\right),
\end{equation}
where $\gamma\in[0,1]$ is the normalized beamforming-gain threshold and 
$\mathcal{X}_{\rm BF}$ denotes the receiver set used for beamforming evaluation. A higher CCDF curve indicates that a larger fraction of receivers can achieve a beamforming gain above the threshold $\gamma$. The resulting CCDFs under different scenarios are reported in Fig.~\ref{fig:bf_ccdf}. 
Overall, EGSR achieves better beamforming fidelity than random and volume-based refinement. 
The CCDF curves of EGSR are generally shifted toward the upper-right region, indicating that a larger fraction of receiver locations can maintain high normalized beamforming gain after refinement. 
This result is noteworthy because radio-map fidelity mainly reflects large-scale path-gain consistency, whereas normalized beamforming gain is more sensitive to the preservation of channel-direction information. Therefore, the results suggest that EGSR not only improves large-scale radio-map accuracy, but also helps preserve beamforming-relevant channel structures in the refined WDT.

\begin{table*}[t]
	\centering
	\caption{Ablation Study on $\Delta$ in EGSR. Each Entry Reports $\mathrm{RMSE}_{\rm cov}/\mathrm{RMSE}_{\rm all}$ in DB.}
	\label{tab:delta_ablation}
	\resizebox{\textwidth}{!}{
		\begin{tabular}{llcccccc}
			\toprule
			\multirow{2}{*}{Scenario} & \multirow{2}{*}{Tx}
			& \multicolumn{6}{c}{$\Delta$ / meter} \\
			\cmidrule(lr){3-8}
			& & 5 & 10 & 30 & 50 & 100 & 150 \\
			\midrule
			\#1 & Tx 1
			& 45.04 / \textbf{22.60}
			& \textbf{40.13} / 22.74
			& 42.02 / 23.16
			& 51.33 / 23.68
			& 68.49 / 24.39
			& 73.50 / 24.75 \\
			
			\#1 & Tx 2
			& \textbf{63.07} / 22.99
			& 68.41 / 22.53
			& 70.56 / \textbf{22.38}
			& 69.53 / 23.32
			& 70.72 / 23.76
			& 68.98 / 23.75 \\
			
			\midrule
			\#2 & Tx 1
			& 16.46 / 19.09
			& 15.23 / 19.20
			& \textbf{11.92} / 19.14
			& 15.06 / 19.19
			& 16.61 / \textbf{18.50}
			& 16.66 / 19.11 \\
			
			\#2 & Tx 2
			& 48.70 / \textbf{29.14}
			& 47.53 / 29.27
			& 53.78 / 29.40
			& \textbf{47.26} / 29.40
			& 56.77 / 30.15
			& 55.62 / 30.66 \\
			
			\midrule
			\#3 & Tx 1 
			& 54.27 / 21.09
			& \textbf{53.81} / \textbf{20.84}
			& 54.56 / 21.01
			& 54.19 / 21.10
			& 54.28 / 21.11
			& 54.08 / 21.43 \\
			
			\#3 & Tx 2  
			& 46.63 / 22.74
			& 47.26 / 22.97
			& \textbf{46.57} / \textbf{21.31}
			& 46.77 / 21.70
			& 52.15 / 23.08
			& 54.39 / 23.68 \\
			\bottomrule
		\end{tabular}
	}
\end{table*}

\subsection{Ablation Study}
We further investigate the sensitivity of EGSR to parameter $\Delta$ in the propagation-relevance ellipsoid. As defined in Section IV, $\Delta$ controls the spatial extent of the ellipsoid region used to characterize potential Tx-Rx propagation relevance. A smaller $\Delta$ results in a more compact ellipsoid, which mainly emphasizes buildings close to the direct Tx-Rx region, whereas a larger $\Delta$ includes a wider set of buildings that may participate in longer NLoS propagation paths. Therefore, the choice of $\Delta$ affects the building ranking results of EGSR and may further influence the wireless fidelity after selective refinement.

To examine this effect, we conduct an ablation study by varying $\Delta \in \{5, 10,30,50,100,150\}$ m. For all settings, the refinement budget is fixed to $W=30$ buildings. The selected buildings are refined by increasing their point-cloud density to $10$ points/m$^2$, while all unselected buildings remain at the initial low-fidelity density of $0.01$ points/m$^2$. The resulting radio-map RMSE is evaluated across the three considered urban scenarios, each with two Tx deployments, to reveal how the choice of $\Delta$ influences the effectiveness of EGSR under different propagation geometries.

As shown in Table~\ref{tab:delta_ablation}, the optimal value of $\Delta$ is scenario- and deployment-dependent. A very small $\Delta$ tends to emphasize buildings close to the direct Tx-Rx region, which is beneficial when the dominant errors are mainly caused by LoS blockage or near-LoS interactions. In contrast, a larger $\Delta$ includes more buildings that may participate in longer NLoS paths, but it may also introduce many geometrically overlapping yet propagation-irrelevant buildings, thereby weakening the selectivity of EGSR. For example, in Scenario~\#1 with Tx~1, increasing $\Delta$ from $10$ m to $150$ m significantly degrades $\mathrm{RMSE}_{\rm cov}$, indicating that an overly large ellipsoid may cause large-volume or weakly relevant buildings to dominate the ranking. In Scenario~\#2 with Tx~1, $\Delta=30$ m achieves the best coverage-region RMSE, suggesting that moderate NLoS participation is useful in this deployment. For Scenario~\#3, EGSR exhibits relatively weak sensitivity to $\Delta$ under Tx~1, where both $\mathrm{RMSE}_{\rm cov}$ and $\mathrm{RMSE}_{\rm all}$ vary only slightly across the tested range of $\Delta$. Under Tx~2, the performance remains comparable for small and moderate values of $\Delta$, e.g., $\Delta=5$--$50$ m, while overly large values such as $\Delta=100$ m and $\Delta=150$ m lead to a clear degradation. The above results indicate that $\Delta$ should be chosen to balance compact LoS-dominated relevance and broader NLoS participation.

\section{Conclusion}\label{SectionVI}
In this paper, we investigated task-oriented nonuniform refinement for wireless digital twins, with the goal of preserving wireless-task fidelity under limited refinement resources. A unified WDT refinement framework was first formulated, where fidelity is modeled as a fine-grained allocation variable over environmental representations, propagation models, and solver configurations. Based on this framework, we focused on building-level geometry refinement in urban WDTs. We found that different buildings have highly heterogeneous and deployment-dependent impacts on wireless fidelity. Motivated by this observation, we proposed an ellipsoid-guided selective refinement algorithm (EGSR), which estimates building refinement priorities directly from a low-fidelity WDT by jointly considering their LoS and NLoS propagation participation. Numerical results in realistic Hong Kong urban scenarios showed that EGSR improves the reliability of radio-map computation and beamforming effectiveness over different baselines by refining only a small subset of buildings.

The key insight from this study is that WDT fidelity should not be treated as a global and uniform design choice. Instead, wireless fidelity is task-dependent, deployment-dependent, and strongly concentrated on a small number of propagation-critical components. Therefore, allocating refinement resources where they matter most can provide a scalable and cost-effective alternative to full-scene high-fidelity reconstruction. Future work may extend the proposed framework to joint geometry-material refinement, multi-transmitter and dynamic scenarios, and learning-assisted refinement policies.

\vfill

\end{document}